\documentclass[lettersize,journal]{IEEEtran}
\usepackage{etoolbox}
\usepackage{algorithmic}
\usepackage{algorithm}
\usepackage{amsmath,amsfonts,flushend}
\usepackage{algorithmic}

\usepackage{array}

\usepackage{textcomp}
\usepackage{stfloats}
\usepackage{url}
\usepackage{verbatim}
\usepackage{graphicx}
\usepackage{cite}
\usepackage{amsmath,amssymb}

\usepackage{mathtools, cuted}
\usepackage{graphicx,graphics,color,psfrag}
\usepackage{cite,balance}
\usepackage{caption}
\captionsetup{font={small}}
\allowdisplaybreaks
\usepackage{subfigure}
\usepackage{accents}
\usepackage{amsthm}
\usepackage{bm}
\usepackage{algorithmic}
\usepackage[english]{babel}
\usepackage{multirow}
\usepackage{enumerate}
\usepackage[T1]{fontenc}
\usepackage{aecompl}
\usepackage{cases}
\usepackage{dsfont}
\usepackage{color,soul}

\usepackage{lipsum}
\usepackage{fancyhdr}
\usepackage{hhline}
\usepackage{hyperref}
\hypersetup{
    colorlinks=true, 
    urlcolor=blue,   
}
\newtheorem{theorem}{\textbf{Theorem}}

\newtheorem{lemma}{\textbf{Lemma}}

\newtheorem{proposition}{\textbf{Proposition}}
\newtheorem{remark}{\textbf{Remark}}

\hyphenation{op-tical net-works semi-conduc-tor IEEE-Xplore}
\newtheorem{assumption}{Assumption}

\begin{document}
\bibliographystyle{IEEEtran}

\title{ICDM: Interference Cancellation Diffusion Models for Wireless Semantic Communications}

\author{Tong Wu, Zhiyong Chen, Dazhi He, Feng Yang,  Meixia Tao, \emph{Fellow, IEEE}, Xiaodong Xu, \\Wenjun Zhang, \emph{Fellow, IEEE}, and Ping Zhang, \emph{Fellow, IEEE}

\thanks{T. Wu, Z. Chen, D. He, F. Yang,  M. Tao, and W. Zhang are with the Cooperative Medianet Innovation Center (CMIC), Shanghai Jiao Tong University, Shanghai 200240, China. F. Yang, M. Tao and W. Zhang are also with the Department of Electronic Engineering, Shanghai Jiao Tong University, Shanghai 200240, China. (e-mail: \{wu\_tong, zhiyongchen, hedazhi, yangfeng, mxtao, zhangwenjun\}@sjtu.edu.cn).}
\thanks{X. Xu and P. Zhang are with the State Key Laboratory of Networking and Switching Technology, Beijing University of Posts and Telecommunications, Beijing 100876, China, and also with the Department of Broadband Communication, PengCheng Laboratory, Shenzhen 518055, China (e-mail: xuxd@pcl.ac.cn; pzhang@bupt.edu.cn).}}



\maketitle

\begin{abstract}
Diffusion models (DMs) have recently achieved significant success in wireless communications systems due to their denoising capabilities. The broadcast nature of wireless signals makes them susceptible not only to Gaussian noise, but also to unaware interference. This raises the question of whether DMs can effectively mitigate interference in wireless semantic communication systems. In this paper, we model the interference cancellation problem as a maximum a posteriori (MAP) problem over the joint posterior probability of the signal and interference, and theoretically prove that the solution provides excellent estimates for the signal and interference. To solve this problem, we develop an interference cancellation diffusion model (ICDM), which decomposes the joint posterior into independent prior probabilities of the signal and interference, along with the channel transition probablity. The log-gradients of these distributions at each time step are learned separately by DMs and accurately estimated through deriving. ICDM further integrates these gradients with advanced numerical iteration method, achieving accurate and rapid interference cancellation. Extensive experiments demonstrate that ICDM significantly reduces the mean square error (MSE) and enhances perceptual quality compared to schemes without ICDM. For example, on the CelebA dataset under the Rayleigh fading channel with a signal-to-noise ratio (SNR) of $20$ dB and signal to interference plus noise ratio (SINR) of $0$ dB, ICDM reduces the MSE by $4.54$ dB and improves the learned perceptual image patch similarity (LPIPS) by $2.47$ dB. 
\end{abstract}

\begin{IEEEkeywords}
Interference cancellation, diffusion models, semantic communication
\end{IEEEkeywords}

\section{Introduction}
Diffusion models (DMs)\cite{DDPM2015,ddpm,smld,projection}, which utilize a score function to model the gradient of the conditional data distribution, have recently achieved remarkable success in the field of artificial intelligence generated content (AIGC). These models are capable of generating controllable and high-quality content in various domains, including long-form text generation, controllable image generation, and consistent video generation. They have also become a fundamental technology for large language models (LLMs) such as GPT-4o. The inherent controllability of the content generated by diffusion models has significantly driven their application across diverse fields\cite{Liang,LAMSC,LLM6G}.

Recently, some research has explored the potential of DMs in wireless communications \cite{Liang,Sengupta}. For example. SING \cite{SING} adopts DMs to generate contents in the approximate null-space of reconstructed images, thereby enhancing image quality. SG2SC \cite{shiguangming} utilizes DMs to generate complete images based on diverse semantic, achieving high-quality image reconstruction with semantic consistency. In \cite{Sergio}, segmentation maps are transmitted to the receiver, which then reconstructs the original image using DMs conditioned on the map, significantly reducing transmission cost while preserving semantic similarity. \cite{DiffCom} treats the received signal as a natural conditioning input and directly applies DMs to generate images based  on the received signal. VLM-CSC in \cite{VLM-CSC} transmits the text description consistent with the image to the receiver, and then the receiver generates the image conditioned on the text with DMs.


Importantly, \cite{CDDM} highlights a fundamental similarity between DMs and wireless communication systems: both aim to recover original data from input signals corrupted by Gaussian noise. Building on this insight, \cite{CDDM} proposes CDDM, which employs DMs as a novel physical module to gradually denoising received signals, thereby enabling higher-quality decoding. This approach has further been extended to address challenges such as denoising in MIMO channels \cite{DM-MIMO}, digital semantic communication systems \cite{mohao}, and channel estimation error mitigation \cite{zhouxingyu}. However, wireless signals not only suffer from Gaussian noise but are also highly susceptible to unaware interference due to their inherent broadcast nature, which can lead to severe performance degradation. Therefore, interference cancellation from received signals remains a critical and long-standing challenge in the wireless communication system \cite{Bameri,liufengwei}. Inspired by the success of CDDM in effectively removing Gaussian noise, it is worth investigating \textbf{whether DMs can be effectively applied to interference cancellation}.

Unlike Gaussian noise, interference patterns are vary widely and cannot be captured by a single unified distribution. As a result, the channel transition probability under interference is difficult to characterize explicitly, making it difficult to design a forward diffusion process that matches this transition. This challenge undermines the denoising paradigm introduced in CDDM, rendering it ineffective for interference. Therefore, there is an urgent need for a new, general paradigm specifically crafted to tackle interference cancellation effectively.

Motivated by these challenges, we propose an interference cancellation diffusion model (ICDM), a practical and feasible method for rapid and accurate interference cancellation. First, we present a novel and general interference cancellation paradigm, which models the interference cancellation problem as a maximum a posteriori (MAP) problem over the joint posterior probability of the signal and interference. We further theoretically prove that the solution to this problem provides excellent estimates of the signal and interference, since the joint mean squared error (MSE) between them and the original clean signal and interference has an upper bound. To solve this MAP problem, we find that Langevin dynamics \cite{Langevin} progressively approximates the solution. This method performs iterations similar to gradient descent and has been proven effective to this task \cite{lanproof}. However, applying this approach to interference cancellation still faces two key challenges: $i)$ each Langevin step requires the score function (i.e., the log-gradient), which is intractable when the interference is unaware and lacks a parametric form; $ii)$ the standard iterative scheme is inefficient: achieving high accuracy demands many steps, resulting in high computational cost and inference delay.

To address these challenges, the proposed ICDM decomposes the joint posterior probability into three factors: the prior of the desired signal, the prior of the interference, and the channel transition probability. We model the log-gradient of each prior distribution independently using a dedicated diffusion model: one capturing the log-gradient of signal, while the other learns that of interference, in line with established approaches to interference cancellation \cite{CGRSA-Net}. Next, we derive the closed-form estimates for the channel transition probability at each iteration step, which in turn yield precise evaluations of the its log-gradients. Finally, by feeding these accurately estimated log-gradients into an advanced numerical iteration method, the proposed method achieves rapid and highly accurate interference cancellation.

\begin{figure*}[t]
\centering
\includegraphics[width=0.95\linewidth]{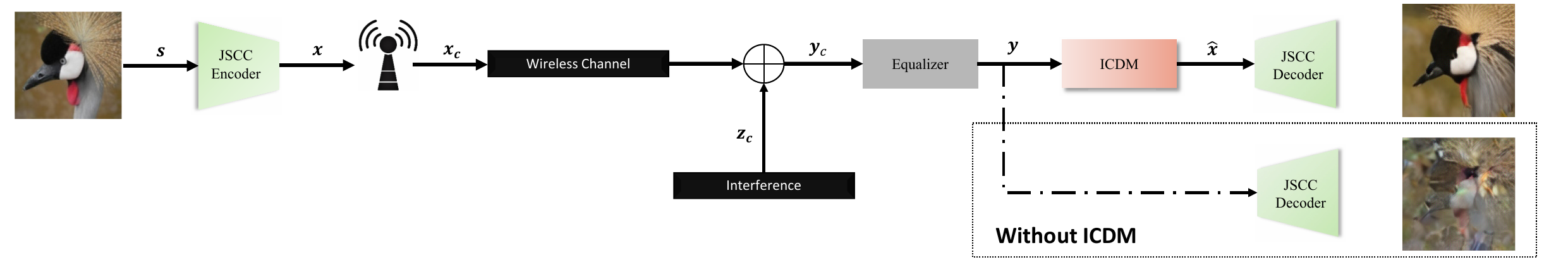}
\caption{The overall architecture of wireless semantic communications system.}
\label{ICDM_system}
\end{figure*}

The main contributions are summarized as follows.
\begin{itemize}
  \item \textbf{Interference Cancellation Paradigm:} We propose a novel and general interference cancellation paradigm, which formulates the problem as a MAP problem over the joint posterior probability of the signal and interference. We then theoretically prove that the solution to this problem yields excellent estimates of the signal and interference. Meanwhile, the Langevin dynamics—a numerically-grounded, iterative method—is proven to effectively solve this MAP problem.
  \item \textbf{Architecture of ICDM:} We design ICDM as a practical and feasible solution to this problem. The ICDM consists of two independent DMs built upon Diffusion Transformer (DiT) \cite{dit} and operates between the equalizer and the decoder. It employs a practical training strategy in which the two models independently learn the interference pattern and the signal feature, yielding log-gradients estimates of their prior distributions at each iteration for effective interference cancellation. 
  \item \textbf{Algorithm of the Methodology:} We theoretically analyze the transition probability at each iteration via Bayes' theorem and thus derive an excellent estimating expression for its log-gradient. The resulting estimation of the transition probability and the learned estimates of the prior distributions by the two DMs are combined with an advanced numerical iteration method, yielding rapid and accurate interference cancellation.
  \item \textbf{Experiments Evaluation:} We conduct extensive experiments on wireless semantic communication systems \cite{Qinzj,SwinJSCC,MambaJSCC}, demonstrating that the proposed ICDM can effectively reduce the MSE between the recovered and original transmitted signal, e,g, by $4.54$ dB, and thus achieving significant performance gains, such as a $2.47$ dB improvement in learned perceptual image patch similarity (LPIPS) \cite{lpips}. Moreover, we present a comprehensive set of studies to illustrate the advantages of ICDM over other gradient estimation methods and demonstrate the feasibility of the proposed ICDM.
\end{itemize}

The rest of this paper is organized as follows. The system model and theorem formulation are introduced in Section \ref{II}. The detail of the proposed ICDM is presented in Section \ref{III}. The extensive experimental results are presented in Section \ref{IV}. Finally, conclusions and future works are drawn in Section \ref{V}.
\section{System Model and Theorem Formulation}\label{II}
In this section, we present the system model under interference and introduce the theorem and methodology for effectively achieving interference cancellation.

\subsection{An overview of the system}
The overall architecture of the wireless semantic communications system is shown in Fig. \ref{ICDM_system}. At the transmitter, a joint source-channel coding (JSCC) encodes the source image $\mathbf{s}\in \mathbb{R}^{3\times H_s\times W_s}$ into a semantic feature vector $\mathbf{x} \in \mathbb{R}^{2k}$, which is then transformed into a complex-valued vextor $\mathbf{x_c}\in \mathbb{C}^{k}$ with normalized power by combining the preamble segment as the real part and the postamble segment as the imaginary part. The signal $\mathbf{x_c}$ is then transmitted through the wireless channel. Here $k$ represents the number of channel uses for transmiting $\mathbf{x_c}$, and $H_s$ and $W_s$ denote the height and width of the source image, respectively.

In this paper, we consider the additive white Gaussian noise (AWGN) channel and the Rayleigh fading channel. Specifically, there exists an unknown additive interference $\mathbf{z_c}$ in the wireless channel. Without loss of generality, we consider that $\mathbf{z_c}$ has the same dimension as $\mathbf{x_c}$ and continuously affects the reception of $\mathbf{x_c}$. Therefore, for $i=1,2,\cdots,k$, the $i$-th element of the received signal $\mathbf{y_c}$ is given by
\begin{align}
  y_{c,i} = \sqrt{P_{x}}h_{x,i}\ x_{c,i} + \sqrt{P_{z}}h_{z,i}\ z_{c,i} + n_{c,i},
\end{align}
where $P_{x}$ and $P_{z}$ are the transmit power of the desired signal and the interference, respectively. Under the Rayleigh fading channel, the gains $h_{x,i}$ and $h_{z,i}$ are independent and identically distributed (i.i.d) as $\mathbb{CN}(0,1)$, while both gains are unity under the AWGN channel. $n_{c,i}\sim \mathbb{CN}(0,\sigma^2)$ are i.i.d AWGN sample.

The received signal $\mathbf{y_c}$ is then equalized and represented as a real vector $\mathbf{y}\in \mathbb{R}^{2k}$. We consider that the receiver knows the channel state $\mathbf{h_x}=[h_{x,1},h_{x,2},\cdot \cdot \cdot,h_{x,k}]$ but not $\mathbf{h_z}=[h_{z,1},h_{z,2},\cdot \cdot \cdot,h_{z,k}]$. Therefore, in this paper, we apply the classic minimum mean square error (MMSE) equalizer under the Rayleigh fading channel. After equalization, ICDM is applied to $\mathbf{y}$ to cancel the interference, and the resulting signal is fed into the JSCC decoder for reconstruction as $\mathbf{\hat{s}}$.

\subsection{Theorem for Interference Cancellation}
We formulate the interference cancellation task as a MAP estimation problem. Specifically, we seek estimates $\mathbf{\hat{x}}$, $\mathbf{\hat{z}}$ that maximize the posterior probability of $\mathbf{x}$ and $\mathbf{z}$ given the observations and known channel state, i.e.,
\begin{align}\label{MAP_prb}
  (\mathbf{\hat{x}}, \mathbf{\hat{z}})= \arg\max_{\mathbf{x},\mathbf{z}} \log p_{\mathbf{x}, \mathbf{z}|\mathbf{y},\mathbf{h_x}}(\mathbf{x},\mathbf{z}|\mathbf{y},\mathbf{h_x}),
\end{align}
where $\mathbf{z}=\begin{bmatrix}Re(\mathbf{h_z}\mathbf{z_c})\\Im(\mathbf{h_z}\mathbf{z_c})\end{bmatrix}$. To illustrate the effectiveness of the formulation, we first clarify their real-valued relationship.

\begin{lemma}\label{lemma1}
The real-valued vectors $\mathbf{x}$, $\mathbf{y}$ and $\mathbf{z}$ satisfy
\begin{align}\label{realy}
  \mathbf{y} = \sqrt{P_{x}}\mathbf{W_x}\mathbf{x} + \sqrt{P_{z}}\mathbf{W_z}\mathbf{z} + \mathbf{W_n}\mathbf{n},
\end{align}
where $\mathbf{n}\sim \mathcal{N}(0,\frac{\sigma^2}{2}\mathbf{I}_{2k})$. For the Rayleigh fading channel with MMSE equalizer, $\mathbf{H}=diag(\begin{bmatrix}|\mathbf{h_x}|\\|\mathbf{h_x}|\end{bmatrix})$, $\mathbf{H_x^r}=diag(Re(\mathbf{h_x}))$, $\mathbf{H_x^I}=diag(Im(\mathbf{h_x}))$, $\mathbf{W_z}=\begin{bmatrix}\ \ \mathbf{H_x^r},\mathbf{H_x^I}\\-\mathbf{H_x^I},\mathbf{H_x^r}\end{bmatrix}(\mathbf{H}^2+\sigma^2\mathbf{I})^{-1}$,
$\mathbf{W_s}=\mathbf{H}^2(\mathbf{H}^2+\sigma^2\mathbf{I})^{-1}$, and $\mathbf{W_n}=\mathbf{H}(\mathbf{H}^2+\sigma^2\mathbf{I})^{-1}$.
For the AWGN channel, $\mathbf{W_s}=\mathbf{W_z}=\mathbf{W_n}=\mathbf{I}_{2k}$.
\end{lemma}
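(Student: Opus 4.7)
The plan is to reduce the per-element complex channel model to the real block-matrix form of equation (3) by (i) writing the output of the coordinate-wise MMSE equalizer, (ii) splitting each complex scalar equation into real and imaginary parts, and (iii) identifying the resulting $2\times 2$ per-coordinate blocks with the matrices $\mathbf{W_s}$, $\mathbf{W_z}$, and $\mathbf{W_n}$ from the statement. Because the channel acts diagonally in $i$ and all three target matrices are block-diagonal with respect to the coordinate index, it suffices to work out a single coordinate $i$ and then assemble over $i=1,\dots,k$.

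First I would write the one-tap MMSE equalizer output (valid because $x_{c,i}$ has unit variance) as
\begin{align*}
\tilde y_i
=\frac{h_{x,i}^{*}}{|h_{x,i}|^{2}+\sigma^{2}}\,y_{c,i}
=\frac{|h_{x,i}|^{2}\sqrt{P_x}}{|h_{x,i}|^{2}+\sigma^{2}}\,x_{c,i}
+\frac{h_{x,i}^{*}h_{z,i}\sqrt{P_z}}{|h_{x,i}|^{2}+\sigma^{2}}\,z_{c,i}
+\frac{h_{x,i}^{*}}{|h_{x,i}|^{2}+\sigma^{2}}\,n_{c,i}.
\end{align*}
The signal coefficient is real and therefore acts identically on $Re(x_{c,i})$ and $Im(x_{c,i})$; stacking across $i$ then produces exactly $\mathbf{W_s}=\mathbf{H}^2(\mathbf{H}^2+\sigma^2\mathbf{I})^{-1}$ applied to $\mathbf{x}$. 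The noise contribution is handled the same way: since $|h_{x,i}^{*}|=|h_{x,i}|$ and $n_{c,i}$ has independent $\mathcal{N}(0,\sigma^2/2)$ real and imaginary parts, stacking yields the claimed $\mathbf{W_n}=\mathbf{H}(\mathbf{H}^2+\sigma^2\mathbf{I})^{-1}$ acting on a real vector $\mathbf{n}\sim\mathcal{N}(0,\tfrac{\sigma^2}{2}\mathbf{I}_{2k})$.

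The main, and essentially only nontrivial, step is recovering the off-diagonal block structure of $\mathbf{W_z}$, because the factor $h_{x,i}^{*}$ mixes the real and imaginary parts of the already complex quantity $h_{z,i}z_{c,i}$. Setting $a_i=Re(h_{z,i}z_{c,i})$, $b_i=Im(h_{z,i}z_{c,i})$ and expanding $h_{x,i}^{*}=Re(h_{x,i})-j\,Im(h_{x,i})$, a short calculation gives
\begin{align*}
\begin{bmatrix}Re(h_{x,i}^{*}h_{z,i}z_{c,i})\\ Im(h_{x,i}^{*}h_{z,i}z_{c,i})\end{bmatrix}
=\begin{bmatrix}\ \ Re(h_{x,i}) & Im(h_{x,i})\\ -Im(h_{x,i}) & Re(h_{x,i})\end{bmatrix}
\begin{bmatrix}a_i\\ b_i\end{bmatrix}.
\end{align*}
Since $\mathbf{z}=[\,Re(\mathbf{h_z}\mathbf{z_c});\,Im(\mathbf{h_z}\mathbf{z_c})\,]$ is precisely the stacking of the pairs $(a_i,b_i)$, assembling these identities over $i$ and dividing by the real diagonal $|h_{x,i}|^{2}+\sigma^{2}$ reproduces the $\mathbf{W_z}$ stated in the lemma. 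I expect this block identification to be the main obstacle, as every other step is either arithmetic or a routine real-imaginary decomposition.

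Finally, the AWGN specialization is immediate: no equalizer is applied and $h_{x,i}=h_{z,i}=1$, so $y_{c,i}=\sqrt{P_x}\,x_{c,i}+\sqrt{P_z}\,z_{c,i}+n_{c,i}$, and separately stacking real and imaginary parts of this scalar equation yields equation (3) with $\mathbf{W_s}=\mathbf{W_z}=\mathbf{W_n}=\mathbf{I}_{2k}$, completing the argument.
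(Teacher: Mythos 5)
Your proposal is correct and follows essentially the same route as the paper's proof: per-symbol MMSE equalization, identifying the real signal coefficient for $\mathbf{W_s}$, the real--imaginary block decomposition of $h_{x,i}^{*}h_{z,i}z_{c,i}$ for $\mathbf{W_z}$, the equality in distribution $h_{x,i}^{*}n_{c,i}\overset{d}{=}|h_{x,i}|n_{c,i}$ (the paper's ``resampling trick'') for $\mathbf{W_n}$, and the immediate AWGN specialization. You merely spell out the $2\times 2$ block identity more explicitly than the paper does.
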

\begin{proof}
  For the Rayleigh fading channel with MMSE equalization, the $i$-th symbol of the equalized signal $y_{eq,i}$ is
\begin{align}
  y_{eq,i}&=\frac{h^H_{x,i}}{|h_{x,i}|^2+\sigma^2}y_{c,i}=\frac{\sqrt{P_x}h^H_{x,i}h_{x,i}}{|h_{x,i}|^2+\sigma^2}x_{c,i}\nonumber\\
  &+\frac{\sqrt{P_z}h^H_{x,i}h_{z,i}}{|h_{x,i}|^2+\sigma^2}z_{c,i}+\frac{h^H_{x,i}}{|h_{x,i}|^2+\sigma^2}n_{c,i}.
\end{align}
 
The coefficient $\frac{h^H_{x,i}h_{x,i}}{|h_{x,i}|^2+\sigma^2}=\frac{|h_{x,i}|^2}{|h_{x,i}|^2+\sigma^2}$ is a real and hence extends directly to the full vector $\mathbf{x}$, yielding the matrix $\mathbf{W_s}$. For the second term, both $h_{z,i}$ and $z_{c,i}$ are unknown, so we group them into the vector $\mathbf{z}$ and express the product $h^H_{x,i}h_{z,i}z_{c,i}$ in block-matrix form. This expression then extends naturally to the full vector $\mathbf{z}$, yielding the matrix $\mathbf{W_z}$. The third term contains the Gaussian noise sample $n_{c,i}$. By applying the resampling trick, in which the $h_{x,i}^Hn_{c,i}$ is equivalent to $|h_{x,i}|n_{c,i}$, we obtain the matrix $\mathbf{W_n}$.
 
 For the AWGN channel, the equalized signal coincides exactly with the received signal, thus we can easily derive the expression.
\end{proof}
Lemma 1 illustrates the structure of $\mathbf{z}$ in (\ref{MAP_prb}). By combining the unknown channel gains $\mathbf{h_z}$ into the interference signal $\mathbf{z_c}$, the coefficient matrices $\mathbf{W_x}$, $\mathbf{W_z}$ and $\mathbf{W_n}$ depend only on the known gains $\mathbf{h_x}$, thereby eliminating all unknown multiplication factors. Therefore, the problem reduces to one of additive interference. We then introduce two assumptions for theoretical analysis.
\begin{assumption}\label{independent_assum}
  The transmitted signal $\mathbf{x}$ and the interference $\mathbf{z}$ are independent.
\end{assumption}

\begin{assumption}\label{local_optmum}
  The functions $\log p_\mathbf{x}(\mathbf{x})$ and $\log p_\mathbf{z}(\mathbf{z})$ are each locally strongly convex, with convexity parameters  $\mu > 0$ and $\nu > 0$, respectively. Moreover, the ground truth $\mathbf{x^*}$ and $\mathbf{z^*}$, which satisfy $\mathbf{y} = \sqrt{P_{x}}\mathbf{W_x}\mathbf{x^*} + \sqrt{P_{z}}\mathbf{W_z}\mathbf{z^*} + \mathbf{W_n}\mathbf{n}$, are local optimums of $p_\mathbf{x}(\mathbf{x})$, $p_\mathbf{z}(\mathbf{z})$. 
\end{assumption}

The two assumptions are practical. In wireless communication system, it is standard to assume that the signal and interference are independent. Moreover, a practically observed data instance often appears within a region of relatively large density, which would otherwise be hard to estimate. Therefore, it is reasonable to assume that the probability density function is strong convex around the observed data, and the observed data are local optimums \cite{argmaxTheorm}. Accordingly, we have the following theorem.
\begin{theorem}
Let $\mathbf{x^*}$ and $\mathbf{z^*}$ be the ground truth satisfying $\mathbf{y} = \sqrt{P_{x}}\mathbf{W_x}\mathbf{x^*} + \sqrt{P_{z}}\mathbf{W_z}\mathbf{z^*} + \mathbf{W_n}\mathbf{n}$. The solutions $\mathbf{\hat{x}}$ and $\mathbf{\hat{z}}$ of (\ref{MAP_prb}) satisfy
  \begin{align}
    (\xi+\lambda_{min})\sqrt{||\mathbf{\hat{x}}-\mathbf{x^*}||^2+||\mathbf{\hat{z}}-\mathbf{z^*}||^2} \leq ||\mathbf{\Sigma}\mathbf{n}||,
  \end{align}
   where $\mathbf{\Sigma}=\frac{2}{\sigma^2}\mathbf{W}^T\mathbf{W}_{\mathbf{n}}^{-1}$, $\mathbf{W}=[\sqrt{P_x}\mathbf{W_x},\sqrt{P_z}\mathbf{W_z}]$, $\lambda_{min}$ is the minimum eigenvalue of $\mathbf{\Sigma }\mathbf{W}_{\mathbf{n}}^{-1}\mathbf{W}$ and $\xi =\min (\mu,\nu)$.
\end{theorem}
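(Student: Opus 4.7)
The plan is to exploit the first-order optimality condition satisfied by the MAP solution $\hat{\mathbf{u}}=[\hat{\mathbf{x}}^T,\hat{\mathbf{z}}^T]^T$, substitute the ground-truth relation from Lemma~\ref{lemma1}, and then combine the local strong concavity of the two log priors (Assumption~\ref{local_optmum}) with an eigenvalue lower bound on the quadratic form coming from the channel likelihood. The endpoint will be an inequality of the form $(\xi+\lambda_{\min})\|\hat{\mathbf{u}}-\mathbf{u^*}\|\le\|\mathbf{\Sigma}\mathbf{n}\|$, from which the claimed joint-error bound follows immediately by splitting the norm into its $\mathbf{x}$ and $\mathbf{z}$ blocks.

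First, using Bayes' rule together with the independence Assumption~\ref{independent_assum}, the joint log posterior factors as
\begin{equation}
\log p_{\mathbf{x},\mathbf{z}|\mathbf{y},\mathbf{h_x}} = \log p_{\mathbf{y}|\mathbf{x},\mathbf{z},\mathbf{h_x}} + \log p_{\mathbf{x}}(\mathbf{x}) + \log p_{\mathbf{z}}(\mathbf{z}) + c(\mathbf{y},\mathbf{h_x}),
\end{equation}
where $c$ is irrelevant for the argmax. By Lemma~\ref{lemma1}, the likelihood is Gaussian in the residual $\mathbf{y}-\mathbf{W}\mathbf{u}$, with $\mathbf{W}=[\sqrt{P_x}\mathbf{W_x},\sqrt{P_z}\mathbf{W_z}]$ and covariance $\tfrac{\sigma^2}{2}\mathbf{W_n}^2$, so the stationarity condition at $\hat{\mathbf{u}}$ becomes $\tfrac{2}{\sigma^2}\mathbf{W}^T\mathbf{W_n}^{-2}(\mathbf{y}-\mathbf{W}\hat{\mathbf{u}}) + \nabla_{\mathbf{u}}[\log p_{\mathbf{x}}(\hat{\mathbf{x}})+\log p_{\mathbf{z}}(\hat{\mathbf{z}})] = \mathbf{0}$. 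Substituting $\mathbf{y}=\mathbf{W}\mathbf{u^*}+\mathbf{W_n}\mathbf{n}$ and recognising that $\tfrac{2}{\sigma^2}\mathbf{W}^T\mathbf{W_n}^{-1}=\mathbf{\Sigma}$ yields
\begin{equation}
\nabla_{\mathbf{u}}[\log p_{\mathbf{x}}(\hat{\mathbf{x}})+\log p_{\mathbf{z}}(\hat{\mathbf{z}})] = \mathbf{\Sigma}\mathbf{W_n}^{-1}\mathbf{W}(\hat{\mathbf{u}}-\mathbf{u^*}) - \mathbf{\Sigma}\mathbf{n}.
\end{equation}

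Next I would take the inner product of both sides with $\hat{\mathbf{u}}-\mathbf{u^*}$ and invoke the monotone-gradient inequality for a strongly concave function at its local maximum: reading Assumption~\ref{local_optmum} as local strong concavity of $\log p_{\mathbf{x}}$ and $\log p_{\mathbf{z}}$ at the stationary points $\mathbf{x^*},\mathbf{z^*}$, block-separability gives $\nabla_{\mathbf{u}}[\log p_{\mathbf{x}}+\log p_{\mathbf{z}}](\hat{\mathbf{u}})^T(\hat{\mathbf{u}}-\mathbf{u^*})\le-\xi\|\hat{\mathbf{u}}-\mathbf{u^*}\|^2$ with $\xi=\min(\mu,\nu)$. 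Because $\mathbf{\Sigma}\mathbf{W_n}^{-1}\mathbf{W}=\tfrac{2}{\sigma^2}\mathbf{W}^T\mathbf{W_n}^{-2}\mathbf{W}$ is symmetric positive semidefinite, its quadratic form is bounded below by $\lambda_{\min}\|\hat{\mathbf{u}}-\mathbf{u^*}\|^2$. Rearranging the resulting chain and applying Cauchy--Schwarz to the remaining noise cross-term gives $(\xi+\lambda_{\min})\|\hat{\mathbf{u}}-\mathbf{u^*}\|^2\le \|\mathbf{\Sigma}\mathbf{n}\|\cdot\|\hat{\mathbf{u}}-\mathbf{u^*}\|$, and dividing through yields the theorem via $\|\hat{\mathbf{u}}-\mathbf{u^*}\|^2=\|\hat{\mathbf{x}}-\mathbf{x^*}\|^2+\|\hat{\mathbf{z}}-\mathbf{z^*}\|^2$.

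The main obstacle I anticipate is the sign convention in Assumption~\ref{local_optmum}: as literally stated, ``$\log p$ is locally strongly convex'' would force $\mathbf{x^*}$ to be a local minimum of the density rather than a maximum, which contradicts the subsequent claim that the ground truth is a local optimum and breaks the monotone inequality that powers the entire argument. I would therefore interpret the assumption as local strong concavity of $\log p_{\mathbf{x}}$ and $\log p_{\mathbf{z}}$, equivalently local strong convexity of the negative log-densities, making the MAP objective locally coercive around $\mathbf{u^*}$. A secondary concern is that the monotone inequality is only valid when both $\hat{\mathbf{u}}$ and $\mathbf{u^*}$ lie in the strongly-concave neighborhood; this must be stated explicitly, but the algebraic content of the inequality chain is unaffected.
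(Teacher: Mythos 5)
Your proposal is correct and follows essentially the same route as the paper's proof: the first-order stationarity condition of the MAP objective, the Gaussian likelihood gradient from Lemma~\ref{lemma1}, substitution of $\mathbf{y}=\mathbf{W}\mathbf{v^*}+\mathbf{W_n}\mathbf{n}$, the monotone-gradient inequality from the strong convexity/concavity assumption with vanishing prior gradients at the ground truth, the eigenvalue lower bound on $\mathbf{W}^T\left(\frac{\sigma^2}{2}\mathbf{W}_\mathbf{n}^2\right)^{-1}\mathbf{W}$, and Cauchy--Schwarz, followed by splitting the concatenated error norm into its $\mathbf{x}$ and $\mathbf{z}$ blocks. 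Your reinterpretation of Assumption~\ref{local_optmum} as local strong concavity of the log priors (equivalently, strong convexity of the negative log-densities) is the sign convention that the paper's algebra implicitly requires, so it is a clarification rather than a genuinely different argument.
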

\begin{proof}
 The estimates $\mathbf{\hat{x}}$, $\mathbf{\hat{z}}$ that solve the MAP inference problem (\ref{MAP_prb}) satisfy
  \begin{align}\label{g_map}
    \nabla _{\mathbf{x},\mathbf{z}} \log p_{\mathbf{x}, \mathbf{z}|\mathbf{y},\mathbf{h_x}}(\mathbf{\hat{x}},\mathbf{\hat{z}}|\mathbf{y},\mathbf{h_x})=0. 
  \end{align}

For simplicity, define the concatenated vector $\mathbf{{v}}=[\mathbf{{x}};\mathbf{{z}}]$, with corresponding estimates $\mathbf{\hat{v}}=[\mathbf{\hat{x}};\mathbf{\hat{z}}]$ and ground truth $\mathbf{{v^*}}=[\mathbf{{x^*}};\mathbf{{z^*}}]$. (\ref{g_map}) can be derived as
  \begin{align}\label{local_opt}
    &\nabla _{\mathbf{v}} \log p_{\mathbf{v}|\mathbf{y},\mathbf{h_x}}(\mathbf{\hat{v}}|\mathbf{y},\mathbf{h_x})=0\nonumber\\
    \Rightarrow &\nabla _{\mathbf{v}} \log p_{\mathbf{{v}}|\mathbf{h_x}}(\mathbf{\hat{v}}|\mathbf{h_x})+\nabla _\mathbf{v}\log p_{\mathbf{y}|\mathbf{v},\mathbf{h_x}}(\mathbf{y}|\mathbf{v},\mathbf{h_x})|_{\mathbf{v}=\mathbf{\hat{v}}}=0\nonumber\\
    \Rightarrow &\nabla _{\mathbf{v}} \log p_{\mathbf{v}}(\mathbf{\hat{v}})=-\nabla _{\mathbf{v}} \log p_{\mathbf{y}|\mathbf{v},\mathbf{h_x}}(\mathbf{y}|\mathbf{v},\mathbf{h_x})|_{\mathbf{v}=\mathbf{\hat{v}}},
  \end{align}
because the prior $\mathbf{v}$ is independent of the channel gains $\mathbf{h_x}$. Specifically, from (\ref{realy}), we have
  \begin{align}\label{y-wv}
  p_{\mathbf{y}|\mathbf{v},\mathbf{h_x}}(\mathbf{y}|\mathbf{v},\mathbf{h_x}) \sim \mathcal{N}(\mathbf{y};\mathbf{Wv},\frac{\sigma^2}{2}\mathbf{W}_\mathbf{n}^2).
  \end{align}

Therefore, we have
  \begin{align}\label{grad_G}
    \nabla _{\mathbf{v}} \log p_{\mathbf{y}|\mathbf{v},\mathbf{h_x}}(\mathbf{y}|\mathbf{v},\mathbf{h_x})=\mathbf{W}^T(\frac{\sigma^2}{2}\mathbf{W}_\mathbf{n}^2)^{-1}(\mathbf{y}-\mathbf{Wv})
  \end{align}

Substituting (\ref{grad_G}) into (\ref{local_opt}), and noting that the gradients vanish at the local optimum points $\nabla_\mathbf{x} p_{\mathbf{x}}(\mathbf{x^*})=\nabla_\mathbf{z}p_{\mathbf{z}}(\mathbf{z^*})=0$, we have:
  \begin{align}
    \nabla _{\mathbf{v}} \log p_{\mathbf{v}}(\mathbf{\hat{v}})-\nabla _{\mathbf{{v}}} \log p_{\mathbf{v}}(\mathbf{{v}^*})=\mathbf{W}^T(\frac{\sigma^2}{2}\mathbf{W}_\mathbf{n}^2)^{-1}(\mathbf{y}-\mathbf{W\hat{v}}).
  \end{align}

Because of the local strong convexity of $\log p_{\mathbf{x}}(\mathbf{x})$ and $\log p_{\mathbf{z}}(\mathbf{z})$, we have
  \begin{gather}
    \langle \nabla _{\mathbf{x}} \log p_{\mathbf{x}}(\mathbf{\hat{x}})-\nabla _{\mathbf{x}} \log p_{\mathbf{x}}(\mathbf{x^*}), (\mathbf{\hat{x}}-\mathbf{x^*})\rangle 
    \geq \mu  ||\mathbf{\hat{x}}-\mathbf{x^*}||^2_2\nonumber\\
    \langle \nabla _{\mathbf{z}} \log p_{\mathbf{z}}(\mathbf{\hat{z}})-\nabla _{\mathbf{z}} \log p_{\mathbf{z}}(\mathbf{z^*}), (\mathbf{\hat{z}}-\mathbf{z^*})\rangle 
    \geq \nu  ||\mathbf{\hat{z}}-\mathbf{z^*}||^2_2,
  \end{gather}  
  where $\mu>0$, $\nu>0$. With Assumption 1, we have
  \begin{align}
    \langle \nabla _{\mathbf{v}} \log p_{\mathbf{v}}(\mathbf{\hat{v}})-\nabla _{\mathbf{v}} \log p_{\mathbf{v}}(\mathbf{v^*}), (\mathbf{\hat{v}}-\mathbf{v^*})\rangle \geq \xi ||\mathbf{\hat{v}}-\mathbf{v^*}||^2_2,
  \end{align}
  where $\xi = \min (\mu,\nu)$. Let $\mathbf{\Omega}=\mathbf{W}^T(\frac{\sigma^2}{2}\mathbf{W}_\mathbf{n}^2)^{-1}$, we derive
  \begin{align}\label{str_cov}
    &\langle\mathbf{\Omega} (\mathbf{y}-\mathbf{Wv^*}), (\mathbf{\hat{v}}-\mathbf{v^*}) \rangle \nonumber\\
    = &\langle \mathbf{\Omega} [(\mathbf{y}-\mathbf{W\hat{v}})+(\mathbf{W\hat{v}}-\mathbf{Wv^*})], (\mathbf{\hat{v}}-\mathbf{v^*}) \rangle \nonumber\\
    = &\langle \mathbf{\Omega} (\mathbf{y}-\mathbf{W\hat{v}}), (\mathbf{\hat{v}}-\mathbf{v^*}) \rangle + \langle \mathbf{\Omega W}(\mathbf{\hat{v}}-\mathbf{v^*}) ,(\mathbf{\hat{v}}-\mathbf{v^*}) \rangle \nonumber\\
    =&\langle \nabla _{\mathbf{v}} \log p_{\mathbf{v}}(\mathbf{\hat{v}})-\nabla _{\mathbf{{v}}} \log p_{\mathbf{v}}(\mathbf{v^*}), (\mathbf{\hat{v}}-\mathbf{v^*}) \rangle \nonumber\\
    &+ \langle \mathbf{\Omega W}(\mathbf{\hat{v}}-\mathbf{v^*}) ,(\mathbf{\hat{v}}-\mathbf{v^*}) \rangle \nonumber\\
    \geq& \xi  ||\mathbf{\hat{v}}-\mathbf{v^*}||^2_2 + \langle \mathbf{\Omega W}(\mathbf{\hat{v}}-\mathbf{v^*}) ,(\mathbf{\hat{v}}-\mathbf{v^*}) \rangle.
  \end{align}

Since $\mathbf{\Omega W}=\mathbf{W}^T(\frac{\sigma^2}{2}\mathbf{W}_\mathbf{n}^2)^{-1}\mathbf{W}$ is a real symmetric matrix, it has real eigenvalues. Let $\lambda_{min}$ be the minimum eigenvalue of $\mathbf{\Omega W}$. Then we have
  \begin{align}
    \langle\mathbf{\Omega} (\mathbf{y}-\mathbf{Wv^*}), (\mathbf{\hat{v}}-\mathbf{v^*}) \rangle
    \geq \xi ||\mathbf{\hat{v}}-\mathbf{v^*}||^2_2 + \lambda_{min}||\mathbf{\hat{v}}-\mathbf{v^*}||^2_2
  \end{align}
  Also, by Cauchy-schwartz inequality, we have
  \begin{align}\label{CB}
    \langle\mathbf{\Omega} (\mathbf{y}-\mathbf{Wv^*}), (\mathbf{\hat{v}}-\mathbf{v^*}) \rangle \leq ||\mathbf{\Omega}(\mathbf{y}-\mathbf{Wv^*})||\cdot ||\mathbf{\hat{v}}-\mathbf{v^*}||\nonumber\\
  \end{align}
  By combining (\ref{str_cov}) and (\ref{CB}), we have:
  \begin{gather}
    (\xi+\lambda_{min}) ||\mathbf{\hat{v}}-\mathbf{v^*}||^2_2 \leq ||\mathbf{\Omega}(\mathbf{y}-\mathbf{Wv^*})||\cdot ||\mathbf{\hat{v}}-\mathbf{v^*}||,\nonumber\\
    (\xi+\lambda_{min})||\mathbf{\hat{v}}-\mathbf{v^*}|| \leq ||\mathbf{\Omega W_n}\mathbf{n}||=||\frac{2}{\sigma^2}\mathbf{W}^T\mathbf{W}_\mathbf{n}^{-1}\mathbf{n}||.
  \end{gather}

  Finally, by expressing $\mathbf{\hat{v}}$ and $\mathbf{v^*}$ with their block elements $\mathbf{\hat{x}}$, $\mathbf{\hat{z}}$, $\mathbf{x^*}$, and $\mathbf{z^*}$, and using $\mathbf{\Sigma} =\frac{2}{\sigma^2}\mathbf{W}^T\mathbf{W}_\mathbf{n}^{-1}$, we have
  \begin{align}\label{final}
    (\xi+\lambda_{min})\sqrt{||\mathbf{\hat{x}}-\mathbf{x^*}||^2+||\mathbf{\hat{z}}-\mathbf{z^*}||^2} \leq ||\mathbf{\Sigma}\mathbf{n}||.
  \end{align}
\end{proof}

\begin{proposition}
  Under either an AWGN channel or a Rayleigh fading channel with MMSE equalization, the following inequality holds
  \begin{align}
    ||\mathbf{\hat{x}}-\mathbf{x^*}||^2+||\mathbf{\hat{z}}-\mathbf{z^*}||^2\leq \frac{1}{(\xi+\lambda_{min})^2}||\mathbf{\Sigma}\mathbf{n}||^2.
  \end{align}
\end{proposition}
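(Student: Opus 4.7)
The plan is to derive the Proposition as an essentially immediate corollary of the Theorem proved just above. Since the Theorem already bounds $(\xi+\lambda_{\min})\sqrt{\|\hat{\mathbf{x}}-\mathbf{x^*}\|^2+\|\hat{\mathbf{z}}-\mathbf{z^*}\|^2}$ by $\|\mathbf{\Sigma}\mathbf{n}\|$, the Proposition follows by squaring both sides and dividing through by $(\xi+\lambda_{\min})^2$. The only non-trivial piece is to certify that this squaring-and-dividing step is legitimate, which reduces to verifying that both sides of the Theorem's inequality are non-negative and that $\xi+\lambda_{\min}$ is strictly positive under both channel models.

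First, I would observe that both sides of the Theorem's bound are non-negative by construction: the left-hand side is a non-negative real constant times a square root, and the right-hand side is a Euclidean norm, so squaring preserves the inequality. Second, I would verify $\xi+\lambda_{\min}>0$. The constant $\xi=\min(\mu,\nu)$ is strictly positive by Assumption 2, which posits $\mu,\nu>0$. For the eigenvalue, the matrix $\mathbf{\Omega}\mathbf{W}=\mathbf{W}^{T}\bigl(\tfrac{\sigma^{2}}{2}\mathbf{W}_{\mathbf{n}}^{2}\bigr)^{-1}\mathbf{W}$ has Gram-type form with a positive-definite middle factor, hence is positive semidefinite and $\lambda_{\min}\ge 0$, provided $\mathbf{W}_{\mathbf{n}}$ is invertible. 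For AWGN this is trivial since $\mathbf{W}_{\mathbf{n}}=\mathbf{I}$; for Rayleigh with MMSE equalization, $\mathbf{W}_{\mathbf{n}}=\mathbf{H}(\mathbf{H}^{2}+\sigma^{2}\mathbf{I})^{-1}$ is diagonal with entries $|h_{x,i}|/(|h_{x,i}|^{2}+\sigma^{2})$, strictly positive whenever $\sigma^{2}>0$ and $h_{x,i}\neq 0$ (which holds almost surely). Combined with $\xi>0$, this yields $\xi+\lambda_{\min}>0$ in both settings.

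With positivity secured, the final step is purely algebraic: square the Theorem's inequality to obtain $(\xi+\lambda_{\min})^{2}\bigl(\|\hat{\mathbf{x}}-\mathbf{x^*}\|^{2}+\|\hat{\mathbf{z}}-\mathbf{z^*}\|^{2}\bigr)\le \|\mathbf{\Sigma}\mathbf{n}\|^{2}$, and divide by $(\xi+\lambda_{\min})^{2}$ to recover the stated bound.

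There is no real obstacle here; the Proposition is effectively a squared restatement of the Theorem. The only subtlety worth spelling out is in the Rayleigh case: because $\mathbf{W}=[\sqrt{P_{x}}\mathbf{W_{x}},\sqrt{P_{z}}\mathbf{W_{z}}]$ is a wide matrix, $\mathbf{\Omega}\mathbf{W}$ may be rank-deficient and $\lambda_{\min}$ can equal zero, so the argument cannot rely on strict positivity of $\lambda_{\min}$ alone. This is harmless because $\xi>0$ by Assumption 2 already guarantees a strictly positive denominator, which is all that is needed to make the division step valid.
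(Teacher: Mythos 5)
Your proposal is correct and follows essentially the same route as the paper: establish that $\mathbf{\Omega W}=\mathbf{W}^{T}(\tfrac{\sigma^{2}}{2}\mathbf{W}_{\mathbf{n}}^{2})^{-1}\mathbf{W}$ is positive semidefinite so that $\lambda_{min}\geq 0$, invoke $\xi>0$ from Assumption 2 to get $\xi+\lambda_{min}>0$, then square and divide the Theorem's bound. Your remarks on the invertibility of $\mathbf{W_n}$ and the possible rank-deficiency of $\mathbf{W}$ are slightly more careful than the paper's wording but do not change the argument.
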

\begin{proof}
  The key is to prove that $\mathbf{\Omega W}$ is a semi-positive definite matrix. It can be observed that $\mathbf{\Omega W}=\mathbf{W}^T(\frac{\sigma^2}{2}\mathbf{W}_\mathbf{n}^2)^{-1}\mathbf{W}$ is a quadratic form involving block matrices.In both case,  $\frac{\sigma^2}{2}\mathbf{W}_\mathbf{n}^2$ is a semi-positive definite matrix for any channel gain vector $\mathbf{h_{x}}$, since it is diagonal with the $i$-th diagonal element being $\frac{\sigma^2|h_{x,i}|^2}{2(|h_{x,i}|^2+\sigma^2)^2} \geq 0$ in the Rayleigh fading channel or $\frac{\sigma^2}{2}$ in the AWGN channel. Therefore, $\mathbf{\Omega W}$ is also semi-positive definite, and we have $\lambda_{min}\geq 0$. Under this condition, we have
  \begin{align}
    \xi+\lambda_{min} > 0.
  \end{align}

Therefore, (\ref{final}) can be further derived by dividing the coefficient and squaring both sides, as follows
\begin{align}
  ||\mathbf{\hat{x}}-\mathbf{x^*}||^2+||\mathbf{\hat{z}}-\mathbf{z^*}||^2\leq \frac{1}{(\xi+\lambda_{min})^2}||\mathbf{\Sigma}\mathbf{n}||^2.
\end{align}
\end{proof}

The theorem and proposition demonstrate that formulating the interference cancellation problem as a MAP problem (\ref{MAP_prb}) is appropriate. By solving the MAP problem, we can obtain the accurate estimates of the signal and interference, with the estimation error bounded by a value dependent on the signal and interference distribution, channel gain, and noise instance.

\subsection{Challenges of Solving the MAP Problem}
Now we need an algorithm to solve the MAP problem (\ref{MAP_prb}) for interference cancellation. For convenience, we denote $[\mathbf{x};\mathbf{z}]$ as $\mathbf{v}$, so $p_{\mathbf{x},\mathbf{z}|\mathbf{y},\mathbf{h_x}}(\mathbf{x},\mathbf{z}|\mathbf{y},\mathbf{h_x})$ can be written as $p_{\mathbf{v}|\mathbf{y},\mathbf{h_x}}(\mathbf{v}|\mathbf{y},\mathbf{h_x})$. Langevin dynamics \cite{Langevin} can generate samples from given distribution. With a fixed step size $\varsigma$, and an inital sample $\mathbf{\bar{v}}_0 \sim \pi (\mathbf{v})$, where $\pi$ is a prior distribution (such as the standard Gaussian distribution, from which samples can be directly drawn), the Langevin method iteratively calculates
\begin{align}
  \mathbf{\bar{v}}_t=\mathbf{\bar{v}}_{t-1}+\frac{\varsigma}{2}\nabla_{\mathbf{v}}\log p_{\mathbf{v}|\mathbf{y},\mathbf{h_x}}(\mathbf{\bar{v}}_{t-1}|\mathbf{y},\mathbf{h_x})+\sqrt{\varsigma}\mathbf{\epsilon_v},
\end{align}
where $\mathbf{\epsilon_v}\sim \mathcal{N}(0,\mathbf{I}_{4k})$. The sample $\mathbf{\bar{v}}_T$ will exactly come from $p_{\mathbf{v}|\mathbf{y},\mathbf{h_x}}(\mathbf{v}|\mathbf{y},\mathbf{h_x})$ when $\varsigma \rightarrow 0$ and $T \rightarrow \infty$. In this case, $\mathbf{v}_T$ will become an exact sample of $p_{\mathbf{v}|\mathbf{y},\mathbf{h_x}}(\mathbf{v}|\mathbf{y},\mathbf{h_x})$ and will converge to a local optimum, where $\nabla_{\mathbf{v}} \log p_{\mathbf{v}|\mathbf{y},\mathbf{h_x}}(\mathbf{\bar{v}}_T|\mathbf{y},\mathbf{h_x})=0$. Therefore, Langevin dynamics can solve the MAP problem and provide a local optimum solution.

However, the iteration method is not feasible and practical now as $ \nabla_\mathbf{v}\log p_{\mathbf{v}|\mathbf{y},\mathbf{h_x}}(\mathbf{v}|\mathbf{y},\mathbf{h_x})$ is untractable. Also, generating a high-quality sample typically requires a large number of iterations, resulting in significant computational cost and time delay \cite{ddpm,smld}, which are critical concerns in the wireless communication system. To address this, we can instead obtain the sample by solving an ordinary differential equation (ODE) with a continuous time variable $t$ as following \cite{projection}
\begin{align}\label{target_ODE}
  d\mathbf{v}=\mathbf{f}(\mathbf{v},t)dt-\frac{1}{2}g^2(t)\nabla_\mathbf{v}\log p_{\mathbf{v}|\mathbf{y},\mathbf{h_x}}(\mathbf{v}|\mathbf{y},\mathbf{h_x})dt.
\end{align}

The ODE can be solved using numerical solvers, which offer higher efficiency and accuracy compared to Langevin dynamics, as they discretize the ODE along the time variable $t$ with a step size and compute iteratively. Moreover, the joint posterior gradient $\nabla_\mathbf{v}\log p_{\mathbf{v}|\mathbf{y},\mathbf{h_x}}(\mathbf{v}|\mathbf{y},\mathbf{h_x})$ can be decomposed to enable practical and tractable estimation as 
\begin{align}\label{required_grad}
  \nabla_\mathbf{v}\log p_{\mathbf{v}|\mathbf{y},\mathbf{h_x}}(\mathbf{v}|\mathbf{y},\mathbf{h_x})&=\begin{bmatrix}
    \nabla_\mathbf{x}\log p_{\mathbf{y}|\mathbf{x},\mathbf{z},\mathbf{h_x}}(\mathbf{y}|\mathbf{x},\mathbf{z},\mathbf{h_x})\\
    \nabla_\mathbf{z}\log p_{\mathbf{y}|\mathbf{x},\mathbf{z},\mathbf{h_x}}(\mathbf{y}|\mathbf{x},\mathbf{z},\mathbf{h_x})
  \end{bmatrix}\nonumber\\
  &+\begin{bmatrix}
    \nabla_\mathbf{x}\log p_{\mathbf{x}}(\mathbf{x})\\
    \nabla_\mathbf{z}\log p_{\mathbf{z}}(\mathbf{z})
    \end{bmatrix}.
\end{align}

Therefore, to accurately and rapidly solve the MAP problem with the convertion to ODE and the decomposition of $\nabla_\mathbf{v}\log p_{\mathbf{v}|\mathbf{y},\mathbf{h_x}}(\mathbf{v}|\mathbf{y},\mathbf{h_x})$, we face the following challenges:
\begin{itemize}
  \item How to accurately estimate the gradients $\nabla_{\mathbf{x}}\log p_{\mathbf{x}}(\mathbf{x}_t)$ and $\nabla_\mathbf{z}\log p_{\mathbf{z}}(\mathbf{z}_t)$ at all time $t$, given that the prior distributions of them are both generally intractable.
  \item How to precisely derive the estimation expression of the intractable gradients $\nabla_{\mathbf{x}_t}\log p_{\mathbf{y}|\mathbf{x}_t,\mathbf{z}_t,\mathbf{h_x}}(\mathbf{y}|\mathbf{x}_t,\mathbf{z}_t,\mathbf{h_x})$ and $\nabla_{\mathbf{z}_t}\log p_{\mathbf{y}|\mathbf{x}_t,\mathbf{z}_t,\mathbf{h_x}}(\mathbf{y}|\mathbf{x}_t,\mathbf{z}_t,\mathbf{h_x})$ across all time steps $t$, which are essential for guiding the sampling process.
  \item How to effectively combine these gradients using advanced numerical solvers to integrate the ODE, enabling accurate and fast recovery of the clean signal.
\end{itemize}

\section{Interference Cancellation Diffusion Models}\label{III}
\begin{figure}[t]
  \centering
  \includegraphics[width=0.97\linewidth]{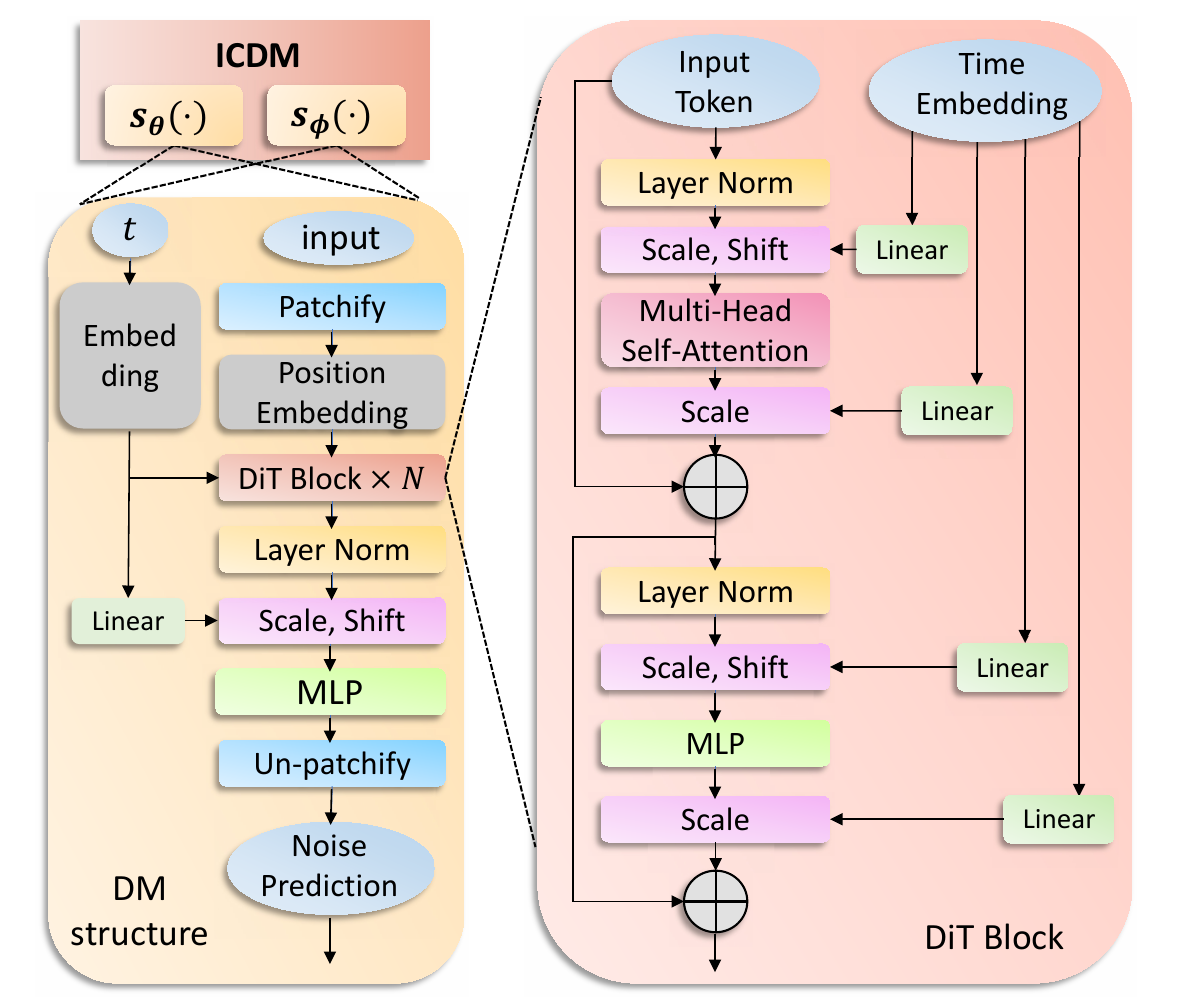}
  \caption{The overall structure of the proposed ICDM.}
  \label{ICDM_structure}
  \end{figure}
In this section, we present the gradients estimation method and advanced numerical iteration sampling algorithms used in the proposed ICDM to solve the ODE efficiently and accurately, enabling the recovery of the final clean signal.

\subsection{Training Algorithm of ICDM for Prior Gradient Estimation}
The ICDM consists of two diffusion models, $\mathbf{s_\theta}(\cdot,t)$ and $\mathbf{s_\phi}(\cdot,t)$, which are trained to accurately estimate the gradients $\nabla_\mathbf{x}\log p_{\mathbf{x}}(\mathbf{x})$ and $\nabla_\mathbf{z}\log p_{\mathbf{z}}(\mathbf{z})$ for all $t$. The structure of ICDM based on DiT \cite{dit} is shown in Fig. \ref{ICDM_structure}.
We first add noise to the data to create a sequence of discrete variables $\mathbf{x}_t$ and $\mathbf{z}_t$ for all $t\in[0,T]$. Specifically, given samples $\mathbf{x}_T\sim p_\mathbf{x}(\mathbf{x})$ and $\mathbf{z}_T\sim p_\mathbf{z}(\mathbf{z})$, the variables $\mathbf{x}_t$ and $\mathbf{z}_t$ are generated as follows
\begin{gather}\label{diff_process}
  \mathbf{x}_t=\sqrt{\alpha_t}\mathbf{x}_T+\sqrt{1-\alpha_t}\mathbf{\epsilon_x}, \quad \mathbf{z}_t=\sqrt{\alpha_t}\mathbf{z}_T+\sqrt{1-\alpha_t}\mathbf{\epsilon_z},
\end{gather}
where $\alpha_t$ denotes the noise schedule, and $\mathbf{\epsilon_x}, \mathbf{\epsilon_z}\sim \mathcal{N}(0,\mathbf{I}_{2k})$ represent the Gaussian noise. The true distributions $q(\mathbf{x}_t|\mathbf{x}_T)$ and $q(\mathbf{z}_t|\mathbf{z}_T)$ can be expressed as 
\begin{align}
  q(\mathbf{x}_t|\mathbf{x}_T)=\mathcal{N}(\mathbf{x}_t;\sqrt{\alpha_t}\mathbf{x}_T,(1-\alpha_t)\mathbf{I}_{2k}),\nonumber\\
  q(\mathbf{z}_t|\mathbf{z}_T)=\mathcal{N}(\mathbf{z}_t;\sqrt{\alpha_t}\mathbf{z}_T,(1-\alpha_t)\mathbf{I}_{2k}).
\end{align}

Then, we train two diffusion models $\mathbf{s_\theta}(\cdot,t)$ and $\mathbf{s_\phi}(\cdot,t)$, with the training objectives of minimizing their negative log likehood functions as following
\begin{gather}
  \min _\mathbf{\theta} \mathbb{E}_{p_\mathbf{x}(\mathbf{x}_T)} \log(p_\mathbf{\theta}(\mathbf{x}_T)), \quad
  \min_\mathbf{\phi}\mathbb{E}_{p_\mathbf{z}(\mathbf{z}_T)} \log(p_\mathbf{\phi}(\mathbf{z}_T)).
\end{gather}

With the variational bound and the diffusion process formulation \cite{ddpm}, the training objective can be derived to noise prediction, leading to the following loss functions for optimizing parameters $\mathbf{\theta}$ and $\mathbf{\phi}$
\begin{gather}\label{sqloss}
  \mathcal{L}_{\mathbf{x}}(\mathbf{\theta})=\mathbb{E}_{t,\mathbf{x}_T,\mathbf{\epsilon_x}}\left[\left\|\mathbf{s_\theta}(\mathbf{x}_t,t)-\mathbf{\epsilon_x}\right\|^2_2\right],\nonumber\\
\mathcal{L}_{\mathbf{z}}(\mathbf{\phi})=\mathbb{E}_{t,\mathbf{z}_T,\mathbf{\epsilon_z}}\left[\left\|\mathbf{s_\phi}(\mathbf{z}_t,t)-\mathbf{\epsilon_z}\right\|^2_2\right].
\end{gather}

 Notably, the training process can be implemented in interference environment beacuse the diffusion models $\mathbf{s_\phi}(\cdot, t)$ and $\mathbf{s_\theta}(\cdot,t)$ are trained independently of each other, eliminating the need for jointly training with the transmitter. The training algorithm for $\mathbf{s}_\phi(\cdot,t)$ in ICDM is outlined in Algorithm \ref{alg:training}, and the procedure for training $\mathbf{s}_\theta(\cdot,t)$ follows a similar approach.
\begin{algorithm}[t]
  \caption{Training Algorithm of $\mathbf{s_\phi}(\cdot,t)$ in ICDM}
  \label{alg:training}
  {\bf {Input:}}
	\small{Collection of interference samples $\mathbf{Z}$, $T$, noise schedule $\alpha_t$.} \\
	{\bf {Output:}}
	\small{Well trained $\mathbf{s_\phi}(\cdot,t)$.}
  \begin{algorithmic}[1]
    \REPEAT
    \STATE $\mathbf{z}_T \sim \mathbf{Z}$ 
    \STATE $\mathbf{\epsilon_z}\sim\mathcal{N}(0,\mathbf{I}_{2k})$
    \STATE $t \sim Uniform(0,T)$
    \STATE Take gradient descent step on \\
    $\nabla_\phi ||\mathbf{s}_\phi(\sqrt{\alpha_t}\mathbf{z}_T+\sqrt{1-\alpha_t}\mathbf{\epsilon_z},t),\mathbf{\epsilon}_z||^2_2$
    \UNTIL {converged}
  \end{algorithmic}
  \end{algorithm}

Using Tweedie's formula, we derive the true expectations $\mathbf{\mu}_{q(\mathbf{x}_t|\mathbf{x}_T)}$ and $\mathbf{\mu}_{q(\mathbf{z}_t|\mathbf{z}_T)}$ of $q(\mathbf{x}_t|\mathbf{x}_T)$ and $q(\mathbf{z}_t|\mathbf{z}_T)$ from their observed samples $\mathbf{x}_t$ and $\mathbf{z}_t$
\begin{align}\label{Tweedie}
  &\mathbb{E}[\mathbf{\mu}_{q(\mathbf{x}_t|\mathbf{x}_T)}|\mathbf{x}_t]=\mathbf{x}_t+(1-\alpha_t)\nabla_{\mathbf{x}}\log p_{\mathbf{x}}(\mathbf{x}_t),\nonumber\\
  &\mathbb{E}[\mathbf{\mu}_{q(\mathbf{z}_t|\mathbf{z}_T)}|\mathbf{z}_t]=\mathbf{z}_t+(1-\alpha_t)\nabla_{\mathbf{z}}\log p_{\mathbf{z}}(\mathbf{z}_t).
\end{align}
By comparing (\ref{Tweedie}) with the diffusion process (\ref{diff_process}), we derive
\begin{align}
  \nabla_{\mathbf{x}}\log p_{\mathbf{x}}(\mathbf{x}_t)=-\frac{1}{\sqrt{1-\alpha_t}}\mathbf{\epsilon_x},\nonumber\\
  \nabla_{\mathbf{z}}\log p_{\mathbf{z}}(\mathbf{z}_t)=-\frac{1}{\sqrt{1-\alpha_t}}\mathbf{\epsilon_z}.
\end{align}

The derived expressions indicate that the noise predicted by the trained models corresponds to the negative log-gradients. Furthermore, \cite{prior_KL} establishes the accuracy of this estimation by demonstrating that the Kullback-Leibler (KL) divergence between the true distribution and the estimated distribution is bounded above by a small value. Therefore, with the trained models $\mathbf{s_\theta}(\cdot,t)$ and $\mathbf{s_\phi}(\cdot,t)$, we can accurately estimate the gradients $\nabla_{\mathbf{x}}\log p_{\mathbf{x}}(\mathbf{x})$ and $\nabla_{\mathbf{z}}\log p_{\mathbf{z}}(\mathbf{z})$ at any time step $t$. This enables precise solutions to the ODE through advanced discrete numerical methods. 

\subsection{Deriving Expression for Guidance Gradient Estimation}
To derive the estimation expressions for the log-gradients $\nabla_\mathbf{x}\log p_{\mathbf{y}|\mathbf{x},\mathbf{z},\mathbf{h_x}}(\mathbf{y}|\mathbf{x},\mathbf{z},\mathbf{h_x})$ and $\nabla_\mathbf{z}\log p_{\mathbf{y}|\mathbf{x},\mathbf{z},\mathbf{h_x}}(\mathbf{y}|\mathbf{x},\mathbf{z},\mathbf{h_x})$, we first define $\mathbf{v}$ as in (\ref{required_grad}) and adopt $\mathbf{v}_t=[\mathbf{x}_t;\mathbf{z}_t]$ for notational simplicity. The gradient $\nabla_{\mathbf{v}_t}\log p_{\mathbf{y}|\mathbf{v}_t,\mathbf{h_x}}(\mathbf{y}|\mathbf{v}_t,\mathbf{h_x})$ is intractable \cite{dps,gdm,rmstruct}, as the iterative computation process only provides the noise observation $\mathbf{v}_t$ at time step $t$, while the ground truth $\mathbf{v}_T$ satisfying the measurement equation $\mathbf{y} = \mathbf{W}\mathbf{v}_T + \mathbf{W_n}\mathbf{n}$ remains inaccessible, where $\mathbf{W}=[\sqrt{P_x}\mathbf{W_x},\sqrt{P_z}\mathbf{W_z}]$. To accurately estimate the gradient, we first derive its probability density function as
\begin{gather}
  p_{\mathbf{y}|\mathbf{v}_t,\mathbf{h_x}}(\mathbf{y}|\mathbf{v}_t,\mathbf{h_x})\nonumber\\
  =\int p_{\mathbf{v}_T|\mathbf{v}_t}(\mathbf{v}_T|\mathbf{v}_t)p_{\mathbf{y}|\mathbf{v}_T,\mathbf{h_x}}(\mathbf{y}|\mathbf{v}_T,\mathbf{h_x})d\mathbf{v}_T.
\end{gather}

This is because the processes of sampling $\mathbf{v}_t$ from $\mathbf{x}_T$ and sampling $\mathbf{y}$ from $\mathbf{v}_T$ are independent. Here, the conditional distribution $p_{\mathbf{y}|\mathbf{v}_T,\mathbf{h_x}}(\mathbf{y}|\mathbf{v}_T,\mathbf{h_x}) = \mathcal{N}(\mathbf{y};\mathbf{Wv}_T,\frac{\sigma^2}{2}\mathbf{W}_\mathbf{n}^2)$ is tractable since $\mathbf{v}_T$ represents the ground truth. Therefore, we analyze the posterior distribution $p_{\mathbf{v}_T|\mathbf{v}_t,\mathbf{h_x}}(\mathbf{v}_T|\mathbf{v}_t,\mathbf{h_x})$. Using Bayes' theorem, we decompose the distribution as follows
\begin{align}
  p_{\mathbf{v}_T|\mathbf{v}_t}(\mathbf{v}_T|\mathbf{v}_t)\varpropto p_{\mathbf{v}_t|\mathbf{v}_T}(\mathbf{v}_t|\mathbf{v}_T)p_{\mathbf{v}_T}(\mathbf{v}_T).
\end{align}

It can be observed that the distribution depends on the prior distribution $p_{\mathbf{v}_T}(\mathbf{v}_T)$. A specific and practical distribution function should be assumed because although the prior distribution is modeled by the proposed ICDM using two diffusion models, it can only sample from the distribution and cannot express it with close-form expression, which is required to obtain the distribution of $p_{\mathbf{y}|\mathbf{v}_t,\mathbf{h_x}}(\mathbf{y}|\mathbf{v}_t,\mathbf{h_x})$. Turning to the interference cancellation problem, both the transmitted signal and the interference are encoded signals, which are expected to follow a Gaussian distribution to maximize the transmission rate. Therefore, although their prior distributions may be complex, we assume here that $p_{\mathbf{v}_T}(\mathbf{v}_T)=\mathcal{N}(0,\hat{\sigma}^2)$ for the purpose of estimation. Under this assumption, we have
\begin{align}
  p_{\mathbf{v}_T|\mathbf{v}_t}(\mathbf{v}_T|\mathbf{v}_t)=\mathcal{N}(\mathbf{v}_T;\frac{\hat{\sigma}^2\mathbf{v}_t}{(1-\alpha_t)+\hat{\sigma}^2},\frac{(1-\alpha_t)\hat{\sigma}^2}{(1-\alpha_t)+\hat{\sigma}^2}\mathbf{I}_{2k}).
\end{align}

By multiplying it with $p_{\mathbf{y}|\mathbf{v}_T,\mathbf{h_x}}(\mathbf{y}|\mathbf{v}_T,\mathbf{h_x})$, we can derive 
\begin{align}
  &p_{\mathbf{v}_T,\mathbf{y}|\mathbf{v}_t,\mathbf{h_x}}(\mathbf{v}_T,\mathbf{y}|\mathbf{v}_t,\mathbf{h_x})=\mathcal{N}(\mathbf{v}_T,\mathbf{y};\zeta_t\mathbf{\bar{\mu}}_t ,\mathbf{\bar{\Theta}}_t ),\nonumber\\
  &\mathbf{\zeta}_t=\frac{\hat{\sigma}^2}{(1-\alpha_t)+\hat{\sigma}^2},\quad \quad \quad \quad \quad \ \mathbf{\bar{\mu}}_t=\begin{bmatrix}
    \mathbf{v}_t\\
    \mathbf{W}\mathbf{v}_t
    \end{bmatrix},\nonumber\\
    &\mathbf{\bar{\Theta}}_t=\frac{(1-\alpha_t)\hat{\sigma}^2}{(1-\alpha_t)+\hat{\sigma}^2}
    \begin{bmatrix}
    \mathbf{I}_{4k} & \mathbf{W}^T\\
    \mathbf{W} & \frac{\sigma^2}{2}\frac{(1-\alpha_t)+\hat{\sigma}^2}{(1-\alpha_t)\hat{\sigma}^2}\mathbf{W}_\mathbf{n}^2+\mathbf{W}\mathbf{W}^T
    \end{bmatrix}.
\end{align}
Therefore, we can obtain the marginal distribution 
\begin{align}\label{gradient}
  &p_{\mathbf{y}|\mathbf{v}_t,\mathbf{h_x}}(\mathbf{y}|\mathbf{v}_t,\mathbf{h_x})=\mathcal{N}(\mathbf{y}; \zeta_t\mathbf{Wv}_t,\mathbf{\Theta}_t),\nonumber\\
  &\mathbf{\Theta}_t=\frac{\sigma^2}{2}\mathbf{W}_\mathbf{n}^2+\frac{(1-\alpha_t)\hat{\sigma}^2}{(1-\alpha_t)+\hat{\sigma}^2}\mathbf{W}\mathbf{W}^T.
\end{align}

Now we can compute the gradient
\begin{align}
\nabla_{\mathbf{v}_t}\log p_{\mathbf{y}|\mathbf{v}_t,\mathbf{h_x}}(\mathbf{y}|\mathbf{v}_t,\mathbf{h_x})=\zeta_t^2\mathbf{W}^T\mathbf{\Theta}_t^{-1}(\frac{\mathbf{y}}{\zeta_t}-\mathbf{Wv}_t).
\end{align}

It is worth noting that we derive the gradient under the assumption that the prior distribution $p_{\mathbf{v}_T}(\mathbf{v}_T)$ is a Gaussian. However, the true prior is likely more complex, even if it closely approximates a Gaussian. Therefore, we retain the form of the gradient (\ref{gradient}) and approximate the parameters $\zeta_t$ and $\mathbf{\Theta}_t$ using surrogate variables that exhibits similar behavior. Focusing first on $\mathbf{\Theta}_t$, one can show that $\mathbf{WW}^T=\mathbf{W_xW}_\mathbf{x}^T+\mathbf{W_zW}_\mathbf{z}^T$ is diagonal, with its $i$-th and $(i+k)$-th diagonal elements given by $\frac{|h_{x,i}|^4+|h_{x,i}|^2}{(|h_{x,i}|^2+\sigma^2)^2}$. Moreover, the effect of the channel gain is reflected in $\mathbf{W}^T$. Therefore, we ignore this term and instead use an appropriate estimate
\begin{align}
  \mathbf{\Theta}_t\approx\frac{\sigma^2}{2}\mathbf{W}_\mathbf{n}^2.
\end{align}

Next, concentrating on $\zeta_t$, we estimate it directly using the expression with $\hat{\sigma}^2=1$ as
\begin{align}
  \zeta_t\approx\frac{1}{2-\alpha_t}.
\end{align}
\begin{figure*}[t]
  \centering
  \includegraphics[width=0.95\linewidth]{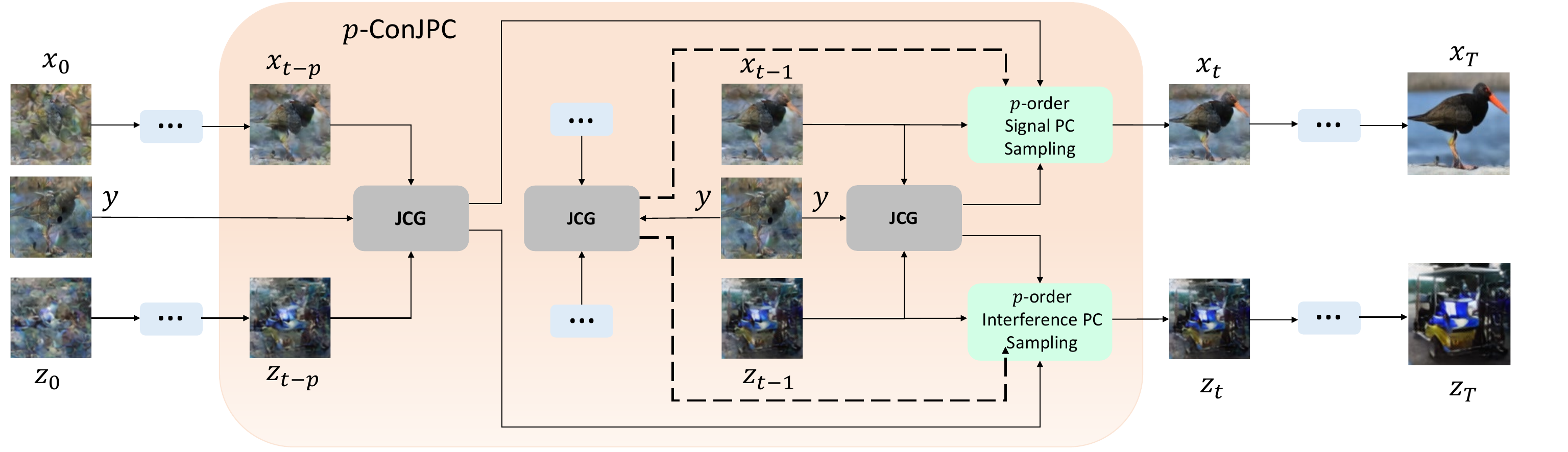}
  \caption{The overall sampling process of ICDM. $\mathbf{x}_t$ and $\mathbf{z}_t$ are invisible signals and we decode them into images here for illustration.}
  \label{p-ConJPC}
  \end{figure*}
  \begin{figure}[t]
    \centering
    \includegraphics[width=0.95\linewidth]{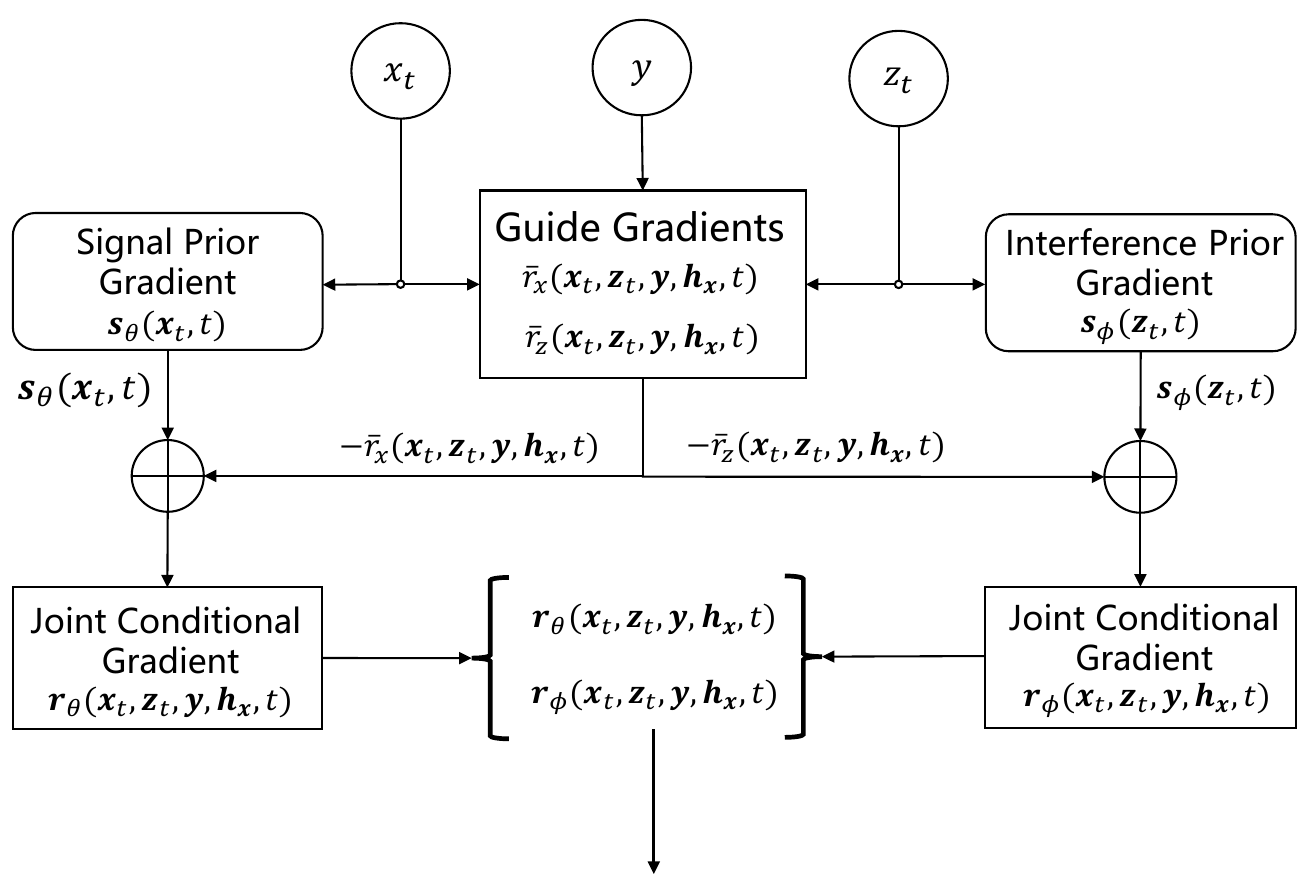}
    \caption{The illustration of the JCG module.}
    \label{JCG}
  \end{figure}
Additionally, the related constants are ignored. We obtain the gradients and rewrite them in terms of $\mathbf{x}$ and $\mathbf{z}$ as
\begin{align}\label{Ours}
    &\nabla_{\mathbf{x}_t}\log p_{\mathbf{y}|\mathbf{x}_t,\mathbf{z}_t,\mathbf{h_x}}(\mathbf{y}|\mathbf{x}_t,\mathbf{z}_t,\mathbf{h_x})\approx \mathbf{\bar{r}}_x(\mathbf{x}_t,\mathbf{z}_t,\mathbf{y},\mathbf{h_x})\nonumber\\
    &=\sqrt{P_x}\zeta_t^2\mathbf{W}_\mathbf{x}^T(\mathbf{W}_{\mathbf{n}}^2)^{-1}(\frac{\mathbf{y}}{\zeta_t}-\sqrt{P_x}\mathbf{W_x}\mathbf{x}_t-\sqrt{P_z}\mathbf{W_z}\mathbf{z}_t),\nonumber\\
    &\nabla_{\mathbf{z}_t}\log p_{\mathbf{y}|\mathbf{x}_t,\mathbf{z}_t,\mathbf{h_x}}(\mathbf{y}|\mathbf{x}_t,\mathbf{z}_t,\mathbf{h_x})\approx \mathbf{\bar{r}}_z(\mathbf{x}_t,\mathbf{z}_t,\mathbf{y},\mathbf{h_x})\nonumber\\
    &=\sqrt{P_z}\zeta_t^2\mathbf{W}_\mathbf{z}^T(\mathbf{W}_{\mathbf{n}}^2)^{-1}(\frac{\mathbf{y}}{\zeta_t}-\sqrt{P_x}\mathbf{W_x}\mathbf{x}_t-\sqrt{P_z}\mathbf{W_z}\mathbf{z}_t).
\end{align}

These precisely estimated gradients not only improve the accuracy of interference cancellation but also enable the proposed ICDM to adaptively adjust the sampling process based on different channel gains, noise schedules, and the sample steps, thereby enhancing its adaptability to various scenarios.
\subsection{Interference Cancellation Algorithm of ICDM}

With these estimated gradients, we develop the ICDM interference cancellation algorithm by solving the ODE in (\ref{target_ODE}) with the proposed \textit{$p$-order joint conditional predict-correct ($p$-ConJPC) sampler} as shown in Fig. \ref{p-ConJPC}. The $p$-ConJPC is a specially designed iteration method that quickly and accurately estimates the clean signal $\mathbf{\hat{x}}$ and interference $\mathbf{\hat{z}}$ from the initial samples $\mathbf{x}_0$ and $\mathbf{z}_0$ drawn from $\mathcal{N}(0,\mathbf{I})$. It does so by combining the new gradients with the advanced UniPC method \cite{unipc}, which achieves $(p+1)$-th order of accuracy.

Using the definition in (\ref{diff_process}), the ODE in (\ref{target_ODE}) is specifically decomposed into two related ODEs for $\mathbf{x}_t$ and $\mathbf{z}_t$, namely
\begin{align}\label{target_ODE2}
  \frac{d\mathbf{x}_t}{dt}=f(t)\mathbf{x}_t-\frac{1}{2}g^2(t)\nabla _{\mathbf{x}_t}\log p(\mathbf{x}_t,\mathbf{z}_t|\mathbf{y},\mathbf{h_x}),\\
  \frac{d\mathbf{z}_t}{dt}=f(t)\mathbf{z}_t-\frac{1}{2}g^2(t)\nabla _{\mathbf{z}_t}\log p(\mathbf{x}_t,\mathbf{z}_t|\mathbf{y},\mathbf{h_x}),\nonumber
\end{align}
where $f(t)=\frac{d \log \sqrt{\alpha_t}}{dt}$, $g^2(t)=\frac{d(1-\alpha_t)}{dt}-2(1-\alpha_t)\frac{d\log \sqrt{\alpha_t}}{dt}$. 

The proposed $p$-order ConJPC sampler consists of three components: $1)$ a joint conditional gradient (JCG) module as illustrated in Fig. \ref{JCG}; $2)$ a $p$-order signal predict-correct (PC) sampling module, as described in Algorithm. \ref{alg:signal-sampling}; $3)$ a $p$-order interference PC sampling module. Given the previous $p$ samples $\{\mathbf{x}_{t-m}\}^p_{m=1}$ and $\{\mathbf{z}_{t-m}\}^p_{m=1}$, the $p$ JCG modules compute $2p$ joint conditional gradients $\{\mathbf{r}_\theta(\mathbf{x}_{t-m},\mathbf{z}_{t-m},\mathbf{y},\mathbf{h_x},t-m)\}^p_{m=1}$ and $\{\mathbf{r}_\phi(\mathbf{x}_{t-m},\mathbf{z}_{t-m},\mathbf{y},\mathbf{h_x},t-m)\}^p_{m=1}$, which are then fed, respectively, into the $p$-order signal PC sampling module and the $p$-order interference PC sampling module. Here, we have
\begin{align}
  \mathbf{r}_\theta(\mathbf{x}_{t},\mathbf{z}_{t},\mathbf{y},\mathbf{h_x},t)&=\mathbf{s}_\mathbf{\theta}(\mathbf{x}_{t},t)-\beta \mathbf{\bar{r}}_x(\mathbf{x}_t,\mathbf{z}_t,\mathbf{y},\mathbf{h_x})\nonumber\\
  &\approx \nabla _{\mathbf{x}_t}\log p(\mathbf{x}_t,\mathbf{z}_t|\mathbf{y},\mathbf{h_x}),\nonumber\\
   \mathbf{r}_\phi(\mathbf{x}_{t},\mathbf{z}_{t},\mathbf{y},\mathbf{h_x},t)&=\mathbf{s}_\mathbf{\phi}(\mathbf{z}_{t},t)-\gamma \mathbf{\bar{r}}_z(\mathbf{x}_t,\mathbf{z}_t,\mathbf{y},\mathbf{h_x})\nonumber\\
  &\approx \nabla _{\mathbf{z}_t}\log p(\mathbf{x}_t,\mathbf{z}_t|\mathbf{y},\mathbf{h_x}).
\end{align}

\begin{remark}\label{hy-par}
 The parameters $\beta$ and $\gamma$ combine all the constants in the estimated gradients and serve as guiding intensities, as introduced in diffusion models with classifier guidance \cite{diffbeatgan}. Here, a large value means the reconstructed signal is closer to the received signal, while a small value means the reconstructed signal is more likely to be generated. Therefore, these parameters balance consistency with the received signal and overall realism in interference cancellation.
\end{remark}

The two sampling modules combine the gradients from the $p$ JCG modules and use a predicter to generate the samples
\begin{align}\label{sampler1}
    &\mathbf{\bar{x}}_{t}=-\sqrt{1-\alpha_t}(e^{\eta _t}-1)\mathbf{r}_\mathbf{\theta}(\mathbf{x}_{t-1},\mathbf{z}_{t-1},\mathbf{y},\mathbf{h_x},t-1)\nonumber\\
    &+\sqrt{\frac{\alpha_t}{\alpha_{t-1}}}\mathbf{x}_{t-1}-\sqrt{1-\alpha_t}\eta _t \sum_{m=1}^{p-1}\frac{o^{p-1}_{m,t}}{w_{m,t}}\mathbf{D}^x_{m,t},\\
    &\mathbf{\bar{z}}_{t}=-\sqrt{1-\alpha_t}(e^{\eta _t}-1)\mathbf{r}_\mathbf{\theta}(\mathbf{z}_{t-1},\mathbf{z}_{t-1},\mathbf{y},\mathbf{h_x},t-1)\nonumber\\
    &+\sqrt{\frac{\alpha_t}{\alpha_{t-1}}}\mathbf{z}_{t-1}-\sqrt{1-\alpha_t}\eta _t \sum_{m=1}^{p-1}\frac{o^{p-1}_{m,t}}{w_{m,t}}\mathbf{D}^z_{m,t},\nonumber\\
  \end{align}
where $\rho_t=\frac{1}{2}\log \frac{\alpha_t}{1-\alpha_t}$, $\eta_t=\rho _t-\rho _{t-1}$, $w_{m,t}=\frac{\rho _{t-m-1}-\rho _{t-1}}{\eta_t}$, for $m=1,2,\cdots,p-1$ and $w_{p,t}=1$. Let $\mathbf{o}^p_t=[o^{p}_{1,t},o^{p}_{2,t},\cdot,\cdot,\cdot,o^{p}_{p,t}]$, which is now can be computed as \cite{unipc}
\begin{align}
  \mathbf{o}^{p}_t=\frac{1}{\eta_t}\mathbf{\Gamma }^{-1}_{p}(t)\mathbf{b}_p(t).
\end{align}
Here, we have
\begin{align}\label{PC-parameter}
  \mathbf{\Gamma }_{p}(t)=\begin{bmatrix}
    1 & 1 &\cdot \cdot \cdot &1\\
    w_{1,t} & w_{2,t}& \cdot \cdot \cdot &w_{p,t}\\
    \cdot \cdot \cdot & \cdot \cdot \cdot &\cdot \cdot \cdot &\cdot \cdot \cdot\\
    w^{p-1}_{1,t} & w^{p-1}_{2,t} &\cdot \cdot \cdot &w^{p-1}_{p,t}
    \end{bmatrix},
\end{align}
\begin{gather}
  \mathbf{b}_p(t)=[(\frac{\psi_1-\frac{1}{1!}}{\eta_t})\cdot 1!,(\frac{\psi_2 -\frac{1}{2!}}{\eta_t})\cdot 2!,\cdots,(\frac{\psi_p-\frac{1}{p!}}{\eta_t})\cdot p!]^T\nonumber\\
  \psi_1=\frac{e^{\eta_t}-1}{\eta_t},\quad \psi_2=\frac{e^{\psi_1}-1}{\eta_t},\cdots,\psi_p=\frac{e^{\psi_{p-1}}-1}{\eta_t}.\nonumber
\end{gather}
We obtain $o^{p-1}_{m,t}$ by converting the $p$-order equations to $(p-1)$-order. The matrices $\mathbf{D}^x_{m,t}$ and $\mathbf{D}^z_{m,t}$ for $m=1,2,\cdots,p-1$, are computed from the gradients as following
\begin{align}
  {w_{m,t}}\mathbf{D}^x_{m,t}&=\mathbf{r}_\theta(\mathbf{x}_{t-m-1},\mathbf{z}_{t-m-1},\mathbf{y},\mathbf{h_x},t-m-1)\nonumber\\
  &-\mathbf{r}_\theta(\mathbf{x}_{t-1},\mathbf{z}_{t-1},\mathbf{y},\mathbf{h_x},t-1),\\
  {w_{m,t}}\mathbf{D}^z_{m,t}&=\mathbf{r}_\phi(\mathbf{x}_{t-m-1},\mathbf{z}_{t-m-1},\mathbf{y},\mathbf{h_x},t-m-1)\nonumber\\
  &-\mathbf{r}_\phi(\mathbf{x}_{t-1},\mathbf{z}_{t-1},\mathbf{y},\mathbf{h_x},t-1).
\end{align}

The method employs generalized polynomial interpolation to determine the optimal coefficients for solving the ODE (\ref{target_ODE2}) with $p$-th order of accuracy.

After predicting the samples $\mathbf{\bar{x}}_{t}$ and $\mathbf{\bar{z}}_{t}$, the corrector further refines them to obtain more accurate samples at the current time step with $(p+1)$-th order accuracy. Specifically, the predicted samples are used to compute the gradients via $\mathbf{s}_\theta(\mathbf{\bar{x}}_t,t)$ and $\mathbf{z}_\theta(\mathbf{\bar{z}}_t,t)$, which are then utilized to calculate $\mathbf{D}^x_{p,t}$ and $\mathbf{D}^x_{p,t}$ as
\begin{align}
  \mathbf{D}^x_{p,t}=\frac{1}{w_{p,t}}\left[\mathbf{s}_\theta(\mathbf{\bar{x}}_{t},t)-\mathbf{r}_\theta(\mathbf{x}_{t-1},\mathbf{z}_{t-1},\mathbf{y},\mathbf{h_x},t-1)\right],\nonumber\\
  \mathbf{D}^z_{p,t}=\frac{1}{w_{p,t}}\left[\mathbf{s}_\phi(\mathbf{\bar{z}}_{t},t)-\mathbf{r}_\phi(\mathbf{x}_{t-1},\mathbf{z}_{t-1},\mathbf{y},\mathbf{h_x},t-1)\right].
\end{align}

\begin{algorithm}[t]
  \caption{$p$-order signal PC sampling algorithm}
  \label{alg:signal-sampling}
  {\bf {Input:}}
	\small{$\mathbf{x}_{t-1}$, $\{\mathbf{r}_\theta(\mathbf{x}_{t-m},\mathbf{z}_{t-m},\mathbf{y},\mathbf{h_x},t-m)\}^p_{m=1}$}, $t$, $\alpha_t$. \\
	{\bf {Output:}}
	\small{$\mathbf{x}_t$.}
  \begin{algorithmic}[1]
    \STATE  $\rho_t=\frac{1}{2}\log \frac{\alpha_t}{1-\alpha_t}$, $\eta_t=\rho _t-\rho _{t-1}$
    \STATE \textbf{Predicter}:
    \FOR {$m=1$ to $p-1$}
    \STATE $w_{m,t}=\frac{1}{\eta_t}(\rho _{t-m-1}-\rho _{t-1})$
    \STATE ${w_{m,t}}\mathbf{D}^x_{m,t}=\mathbf{r}_\theta(\mathbf{x}_{t-m-1},\mathbf{z}_{t-m-1},\mathbf{y},\mathbf{h_x},t-m-1)$\\
    $-\mathbf{r}_\theta(\mathbf{x}_{t-1},\mathbf{z}_{t-1},\mathbf{y},\mathbf{h_x},t-1)$
    \ENDFOR
    \STATE $\mathbf{o}^{p-1}_{t}=\frac{1}{\eta_t}\mathbf{\Gamma }^{-1}_{p-1}(t)\mathbf{b}_{p-1}(t)$ as defined in (\ref{PC-parameter})
    \STATE $\mathbf{\bar{x}}_{t}=-\sqrt{1-\alpha_t}(e^{\eta _t}-1)\mathbf{r}_\mathbf{\theta}(\mathbf{x}_{t-1},\mathbf{z}_{t-1},\mathbf{y},\mathbf{h_x},t-1)$\\
    $+\sqrt{\frac{\alpha_t}{\alpha_{t-1}}}\mathbf{x}_{t-1}-\sqrt{1-\alpha_t}\eta _t \sum_{m=1}^{p-1}\frac{o^{p-1}_{m,t}}{w_{m,t}}\mathbf{D}^x_{m,t}$
    \STATE \textbf{Corrector}:
    \STATE $w_{p,t}=1$, $\mathbf{o}^{p}_{t}=\frac{1}{\eta_t}\mathbf{\Gamma }^{-1}_{p}(t)\mathbf{b}_p(t)$ 
    \STATE ${w_{p,t}}\mathbf{D}^x_{p,t}=\mathbf{s}_\theta(\mathbf{\bar{x}}_{t},,t)-\mathbf{r}_\theta(\mathbf{x}_{t-1},\mathbf{z}_{t-1},\mathbf{y},\mathbf{h_x},t-1)$
    \STATE $\mathbf{{x}}_{t}=-\sqrt{1-\alpha_t}(e^{\eta _t}-1)\mathbf{r}_\mathbf{\theta}(\mathbf{x}_{t-1},\mathbf{z}_{t-1},\mathbf{y},\mathbf{h_x},t-1)$\\
    $+\sqrt{\frac{\alpha_t}{\alpha_{t-1}}}\mathbf{x}_{t-1}-\sqrt{1-\alpha_t}\eta _t \sum_{m=1}^{p}\frac{o^{p}_{m,t}}{w_{m,t}}\mathbf{D}^x_{m,t}$
    
  \end{algorithmic}
  \end{algorithm}

  \begin{algorithm}[t]
    \caption{Interference Cancellation Algorithm of ICDM}
    \label{alg:sampling}
    {\bf {Input:}}
    \small{$\mathbf{y}$, $\mathbf{h_x}$, SINR} \\
    {\bf {Output:}}
    \small{Estimated $\mathbf{\hat{x}}$, $\mathbf{\hat{z}}$}.
    \begin{algorithmic}[1]
      \STATE $\mathbf{x}_0\sim\mathcal{N}(0,\mathbf{I}_{2k})$, $\mathbf{z}_0\sim\mathcal{N}(0,\mathbf{I}_{2k})$
      \FOR {$t=1$ to $p$}
      \STATE $\{\mathbf{r}_\theta(\mathbf{x}_{t-m},\mathbf{z}_{t-m},\mathbf{y},\mathbf{h_x},t-m)\}^t_{m=1}$, \\
      $\{\mathbf{r}_\phi(\mathbf{x}_{t-m},\mathbf{z}_{t-m},\mathbf{y},\mathbf{h_x},t-m)\}^t_{m=1}$ from JCG module
      \STATE $\mathbf{x}_t$ from $t$-order signal PC sampling module
      \STATE $\mathbf{z}_t$ from $t$-order interference PC sampling module
      \ENDFOR
      \FOR {$t=p+1$ to $T$}
      \STATE $\{\mathbf{r}_\theta(\mathbf{x}_{t-m},\mathbf{z}_{t-m},\mathbf{y},\mathbf{h_x},t-m)\}^p_{m=1}$, \\
      $\{\mathbf{r}_\phi(\mathbf{x}_{t-m},\mathbf{z}_{t-m},\mathbf{y},\mathbf{h_x},t-m)\}^p_{m=1}$ from JCG module
      \STATE $\mathbf{x}_t$ from $p$-order signal PC sampling module
      \STATE $\mathbf{z}_t$ from $p$-order interference PC sampling module
      \ENDFOR
      \STATE $\mathbf{\hat{x}}=\mathbf{x}_T$, $\mathbf{\hat{z}}=\mathbf{z}_T$
    \end{algorithmic}
    \end{algorithm}

In addition, the coefficients $o^p_{m,t}$ for $m=1,2,\cdots,p$ need to be recomputed, as they differ from the elements in $\mathbf{o}^{p-1}_t$. Using these data, we can obtain more accurate estimates of $\mathbf{x}_t$ and $\mathbf{z}_t$ with $(p+1)$-th order accuracy as
\begin{align}\label{sampler-corrector}
  &\mathbf{x}_{t}=-\sqrt{1-\alpha_t}(e^{\eta _t}-1)\mathbf{r}_\mathbf{\theta}(\mathbf{x}_{t-1},\mathbf{z}_{t-1},\mathbf{y},\mathbf{h_x},t-1)\nonumber\\
  &+\sqrt{\frac{\alpha_t}{\alpha_{t-1}}}\mathbf{x}_{t-1}-\sqrt{1-\alpha_t}\eta _t \sum_{m=1}^{p}\frac{o^{p}_{m,t}}{w_{m,t}}\mathbf{D}^x_{m,t},\\
  &\mathbf{z}_{t}=-\sqrt{1-\alpha_t}(e^{\eta _t}-1)\mathbf{r}_\mathbf{\theta}(\mathbf{z}_{t-1},\mathbf{z}_{t-1},\mathbf{y},\mathbf{h_x},t-1)\nonumber\\
  &\sqrt{\frac{\alpha_t}{\alpha_{t-1}}}\mathbf{z}_{t-1}-\sqrt{1-\alpha_t}\eta _t \sum_{m=1}^{p}\frac{o^{p}_{m,t}}{w_{m,t}}\mathbf{D}^z_{m,t},
\end{align}
which are the final samples of the ODEs at this time. 

Notably, we use only unconditional gradients for $\mathbf{D}^x_{p,t}$ and $\mathbf{D}^z_{p,t}$ to correct the predicted error of the predicted samples, improving accuracy from $p$-th order to $(p+1)$-th order and thus yielding more realistic samples. As discussed in Remark \ref{hy-par}, a more realistic sample requires samller $\beta$ and $\gamma$. Therefore, we set both to zero, which leads directly to these unconditional gradients. In the initial steps ($t\in[1,p]$), where fewer than $p$ previous samples are available, we employ a $t$-order ConJPC sampler to acquire the current samples.

After the last step, we can obtain the final results $\mathbf{\hat{x}}$ for signal estimation and $\mathbf{\hat{z}}$ for interference estimation, which are also solutions to the MAP problem (\ref{MAP_prb}). The interference cancellation algorithm of ICDM is shown in Algorithm \ref{alg:sampling}. The algorithm is accurate as it jointly solves the ODEs with $(p+1)$-th order accuracy, leveraging precisely estimated gradients from separate training and derivation. It is also fast because the $(p+1)$-th order iteration method reduces the steps needed to acquire the final samples $\mathbf{\hat{x}}$ and $\mathbf{\hat{z}}$. Moreover, the unconditional gradients $\mathbf{D}^x_{p,t}$ and $\mathbf{D}^z_{p,t}$ in the sampling modules can be computed in parallel, further accelerating the algorithm. 

\begin{figure*}[t]
  \centering
  \subfigure[]{\label{CelebAMSE}\includegraphics[width=0.325\textwidth]{./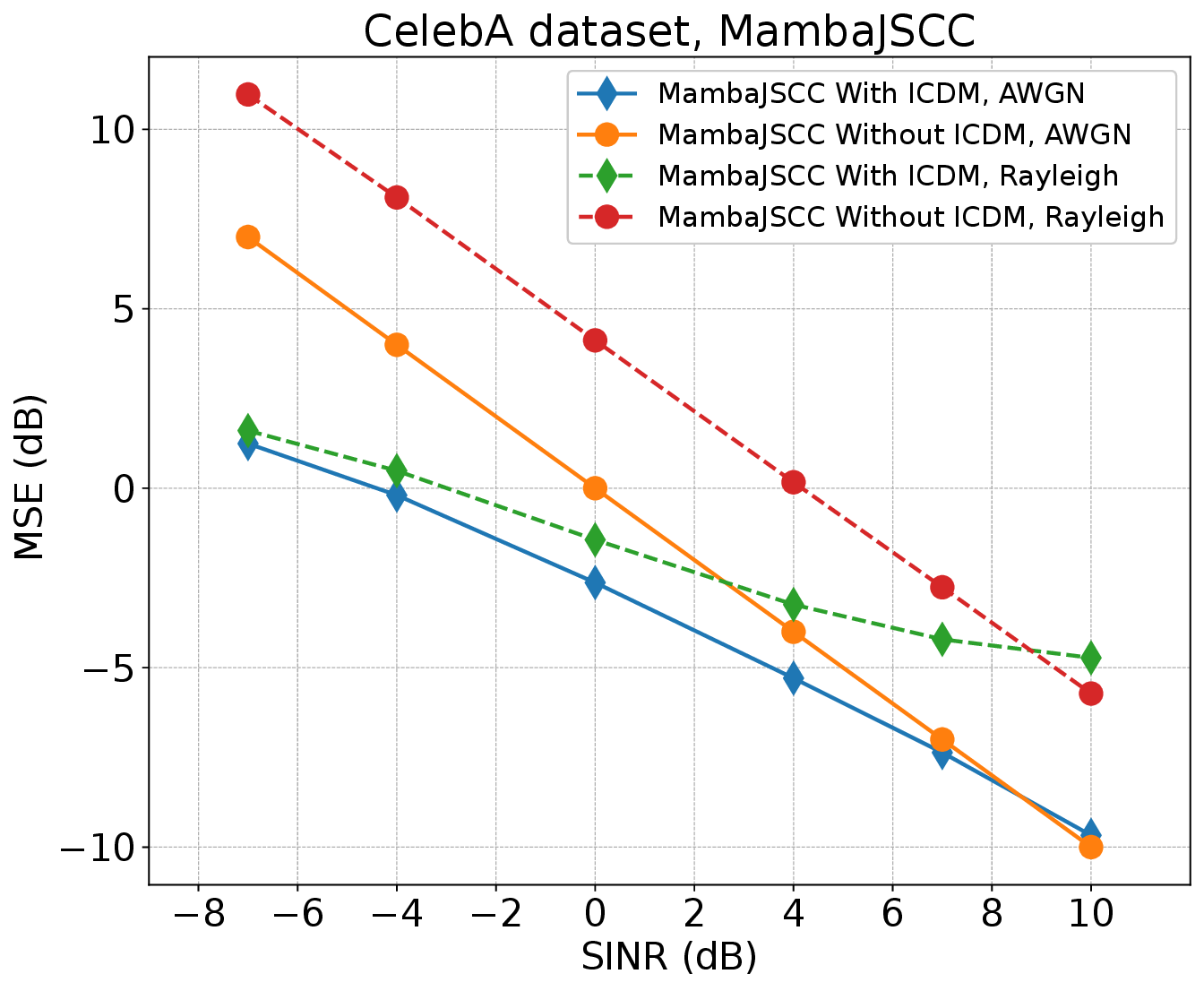}}
  \subfigure[]{\label{CelebALPIPS}\includegraphics[width=0.325\textwidth]{./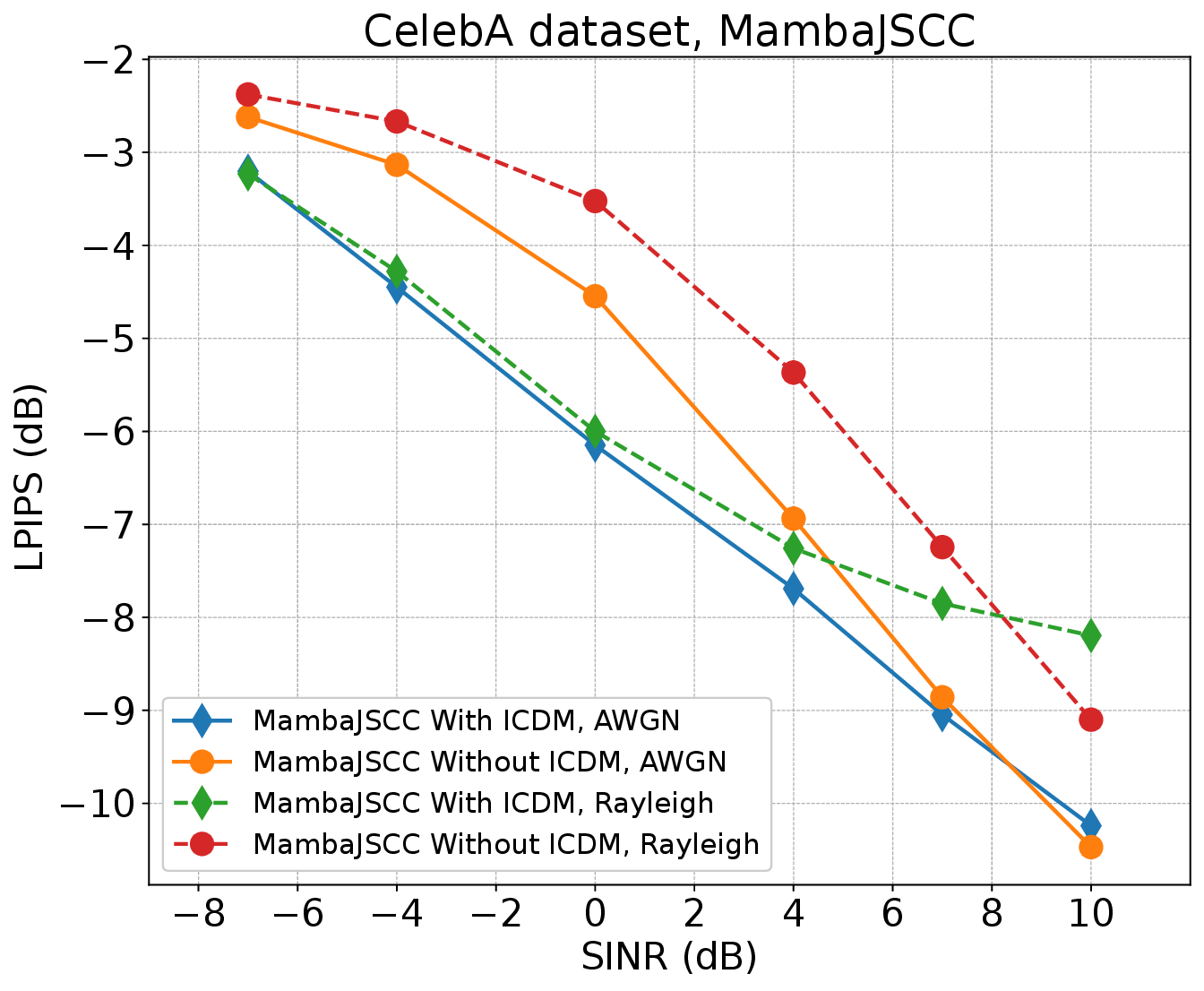}}
  \subfigure[]{\label{CelebACLIP}\includegraphics[width=0.325\textwidth]{./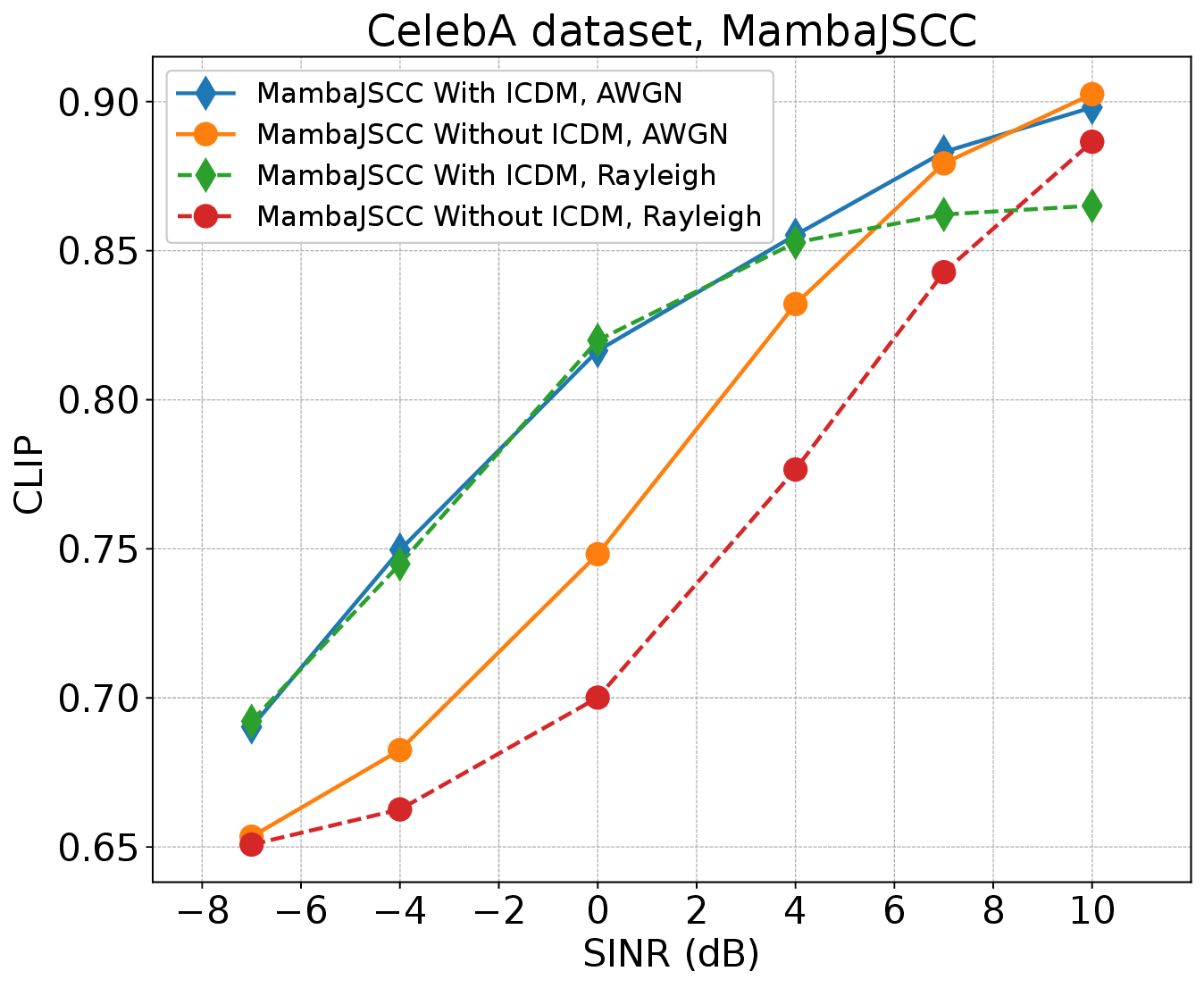}}
  \caption{MSE, LPIPS and CLIP performance of the MambaJSCC-based schemes on the CelebA dataset under both the AWGN and Rayleigh fading channels.}
  \label{CelebA}
  \end{figure*}

  \begin{figure*}[t]
    \centering
    \subfigure[]{\label{BKMSE}\includegraphics[width=0.325\textwidth]{./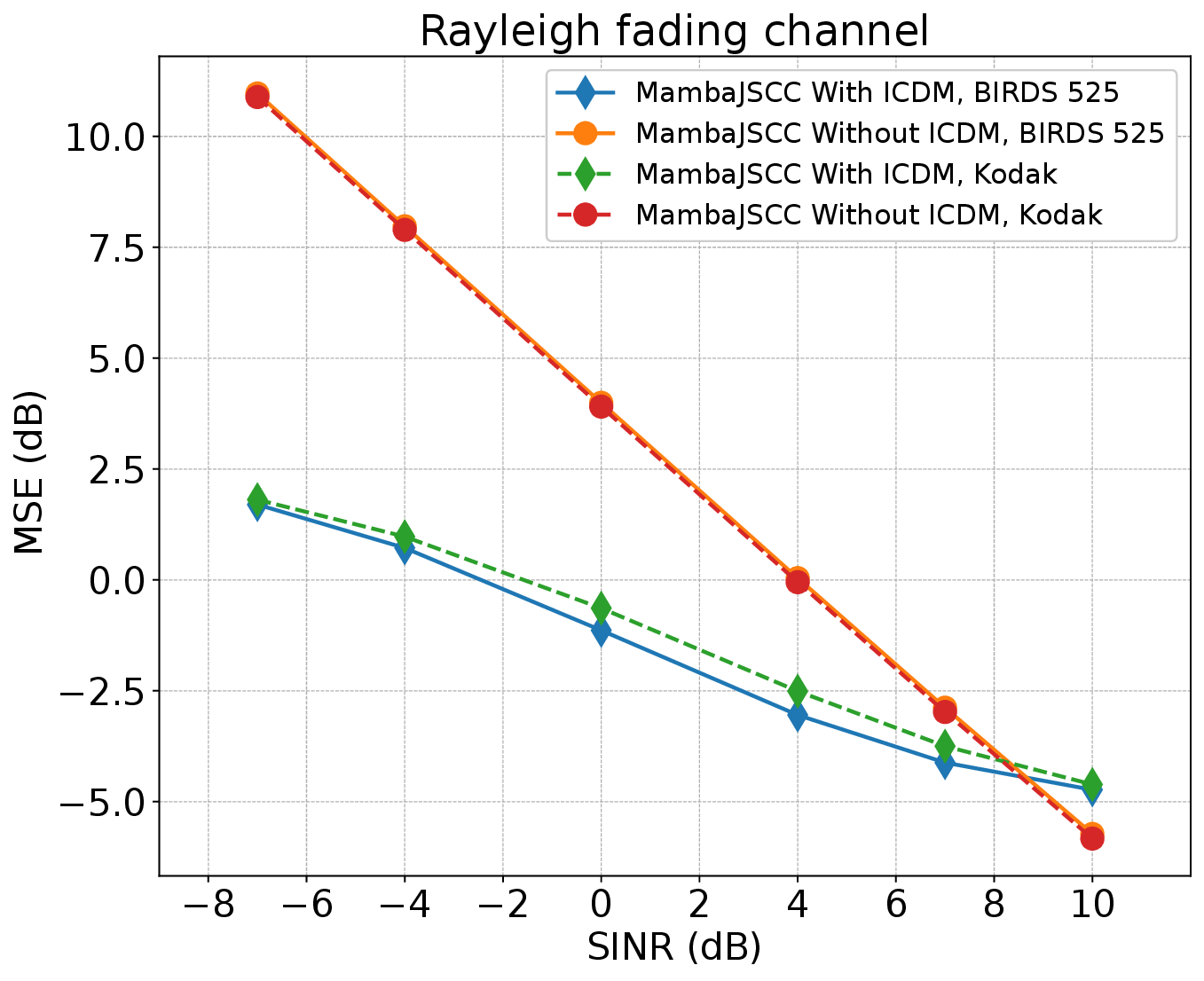}}
    \subfigure[]{\label{BKLPIPS}\includegraphics[width=0.325\textwidth]{./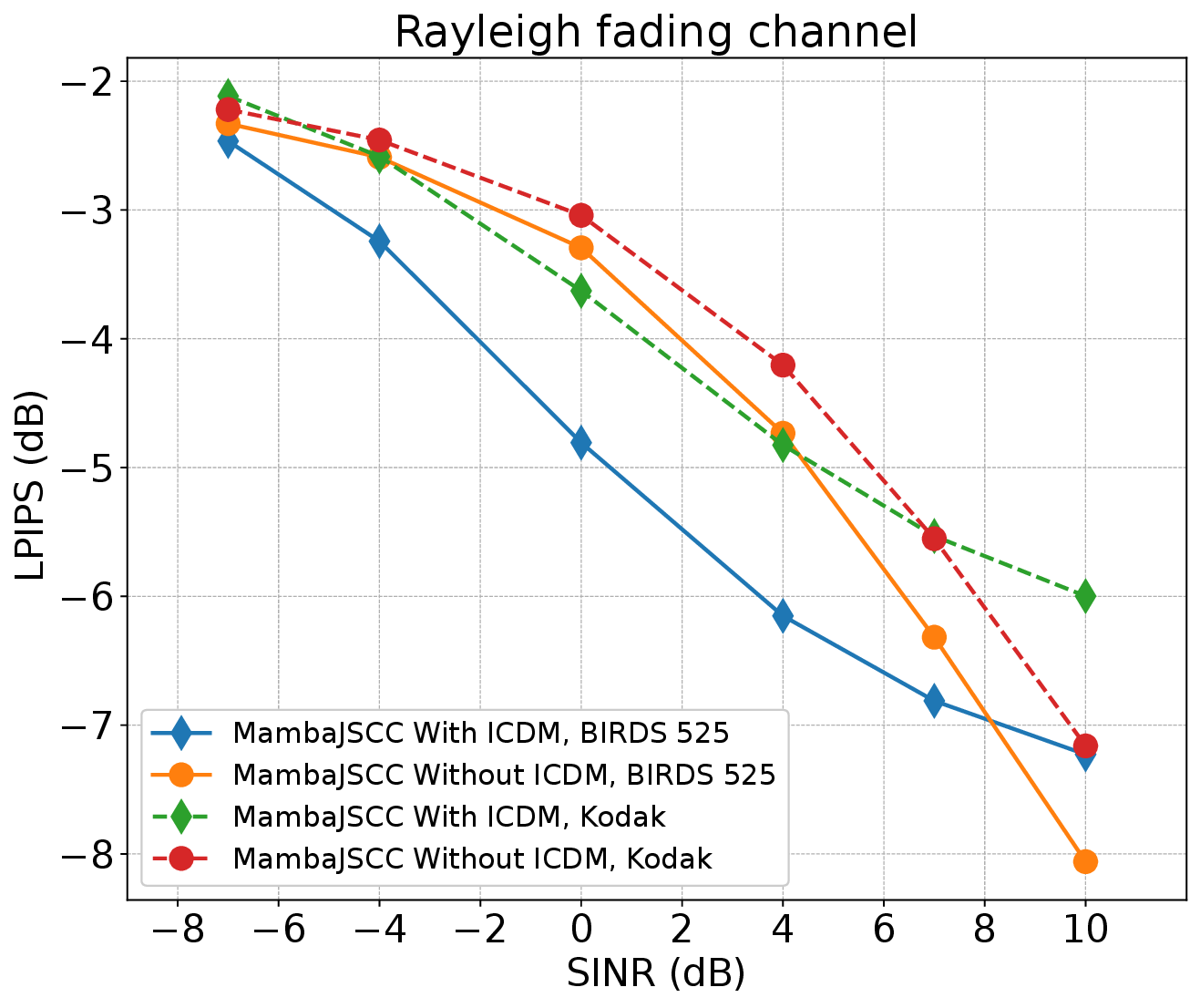}}
    \subfigure[]{\label{BKCLIP}\includegraphics[width=0.325\textwidth]{./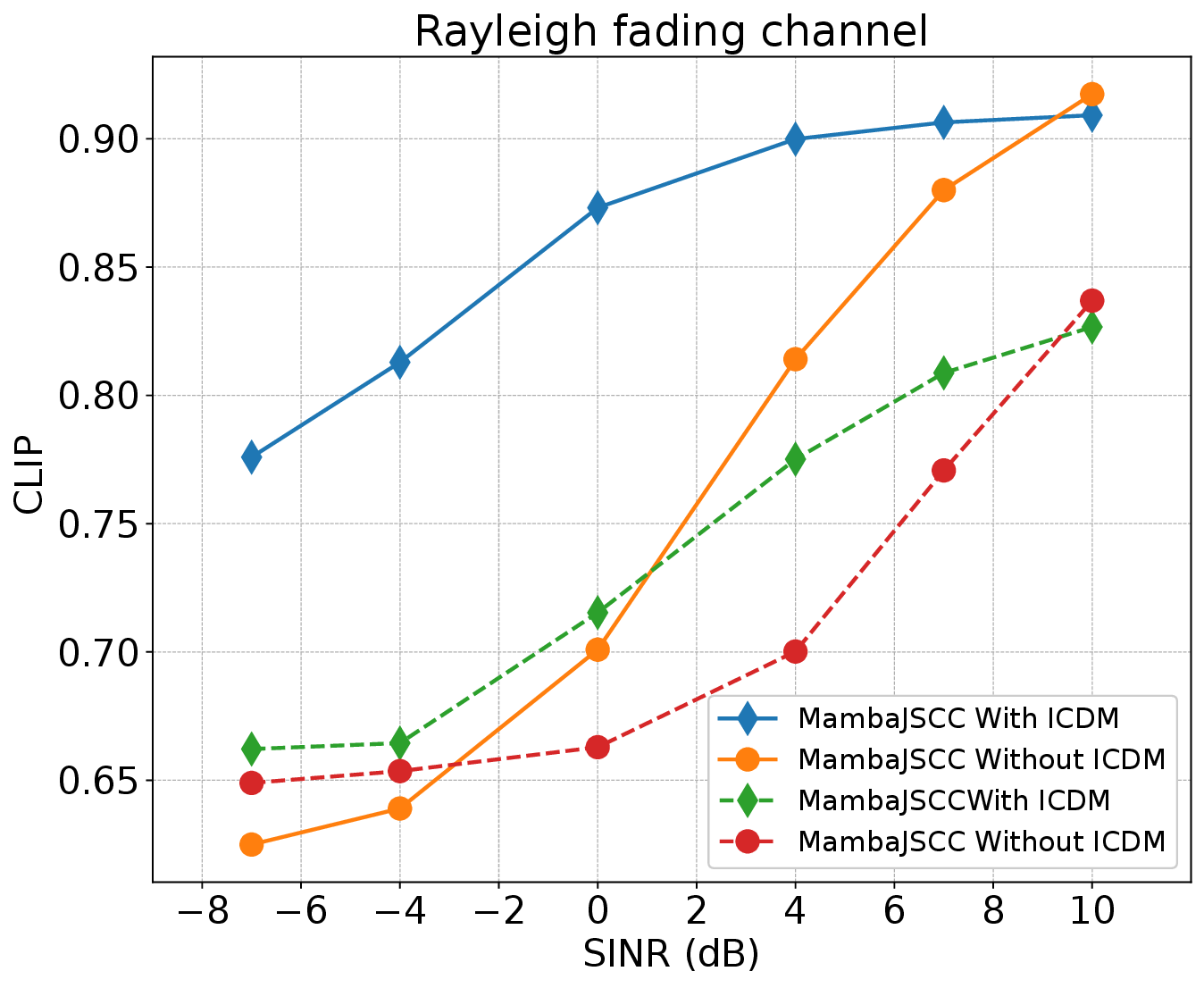}}
    \caption{MSE, LPIPS and CLIP performance of the MambaJSCC-based schemes on the BIRDS 525 and Kodak datasets under the Rayleigh fading channel.}
    \label{BK}
    \end{figure*}
\section{Experimental Results}\label{IV}
In this section, we evaluate the performance of the proposed ICDM based on numerical results.
\subsection{Experimental Setup}
To comprehensively evaluate the proposed ICDM, we employ three datasets: CelebA \cite{CelebA}, BIRDS 525 and Open Image V4 \cite{OpenImg}. The CelebA dataset contains 150,000 face images for training, with 20,000 sampled for evaluation. The BIRDS 525 dataset has 84635 training images of diverse birds in various scenes and 2600 for evaluation. From the Open Image V4 dataset, which contains 9 million images from the real world, we sample 5 million for training and evaluate the performance on the Kodak dataset. During training and evaluation, we resize these images to $128\times 128$ to facilitate implementation without loss of key features.

We evaluate perceptual quality using LPIPS \cite{lpips} and Contrastive Language-Image Pre-training (CLIP) score \cite{clipscore}. The LPIPS is a learned perceptual image patch similarity metric that measures a weighted MSE between deep features of two images extracted by a pretrained VGG16 model \cite{lpips}, while CLIP score computes the cosine similarity between embedding from a pretrained Vision Transformer \cite{clipscore}. For easier comparison in the figure, we convert the mean LPIPS to decibels via $LPIPS(dB)=10\log_{10}({LPIPS})$ \cite{SwinJSCC}. Computational complexity is evaluated via multiply-accumulate operations (MACs), calculated with the Torch Operation Counter library and parameters are counted using the torch library. Inference time is measured on a single NVIDIA RTX 4090 GPU with a batch size of 1. 

We evaluate two advanced JSCC schemes, MambaJSCC \cite{MambaJSCC} and SwinJSCC \cite{SwinJSCC}. We conduct a comparative analysis among the schemes as follows: 1) \textbf{MambaJSCC with ICDM}, where the transmitter uses the MambaJSCC encoder and the decoder first cancels  interference using ICDM and then reconstructs images with the MambaJSCC decoder; 2) \textbf{MambaJSCC without ICDM}, where the decoder skips the interference cancellation step and directly reconstructs images; 3) \textbf{SwinJSCC with ICDM}, where SwinJSCC replaces MambaJSCC in both encoder and decoder while retaining the ICDM step; 4) \textbf{SwinJSCC without ICDM},  defined analogously but without interference cancellation. All JSCC schemes are trained at an signal-to-noise ratio (SNR) of 20 dB without interference, utilizing LPIPS as the loss function to enhance perceptual performance. For each dataset, we train separate models for the AWGN channel and the Rayleigh fading channel to achieve better performance, since the channel type is known during training.

All schemes are evaluated at $\frac{k}{3H_sW_s}=\frac{1}{96}$. The interference is generated by the SwinJSCC scheme with different parameters, deciding the interference pattern. The setting is practical because JSCC schemes are usually pretrained for a specific SNR and channel type to ensure better performance before implementation, and at that time, the interference is unaware. After implementation, even if the decoder detects the presence of interference, it is impractical to fine-tuned the JSCC schemes online. This is because severe interference makes it difficult and resource-consuming to transmit a large number of gradients back to the transmitter. Additionally, to better demonstrate the effectiveness of interference cancellation and avoid confusion with denoising, we consider a scenario where the unware interference mimics Gaussian noise. Therefore, the SNR is set to 20 dB without loss of generality. 

In the proposed ICDM, we train $\mathbf{s}_\mathbf{\theta}(\cdot,t)$ for each JSCC encoder at the transmitter. For a given interference source, we train two $\mathbf{s}_\mathbf{\phi}(\cdot,t)$, one for each channel types, since the channel state should be considered as illustrated in Lemma \ref{lemma1}. We set the number of iteration steps $T=40$ for ICDM. All JSCC schemes and ICDM models are trained with the Adam optimizer with a learning rate of $10^{-4}$ and a batch size of $20$.
\subsection{Performance Evaluation}
\begin{figure*}[t]
  \begin{center}
    \includegraphics[width=0.95\textwidth]{./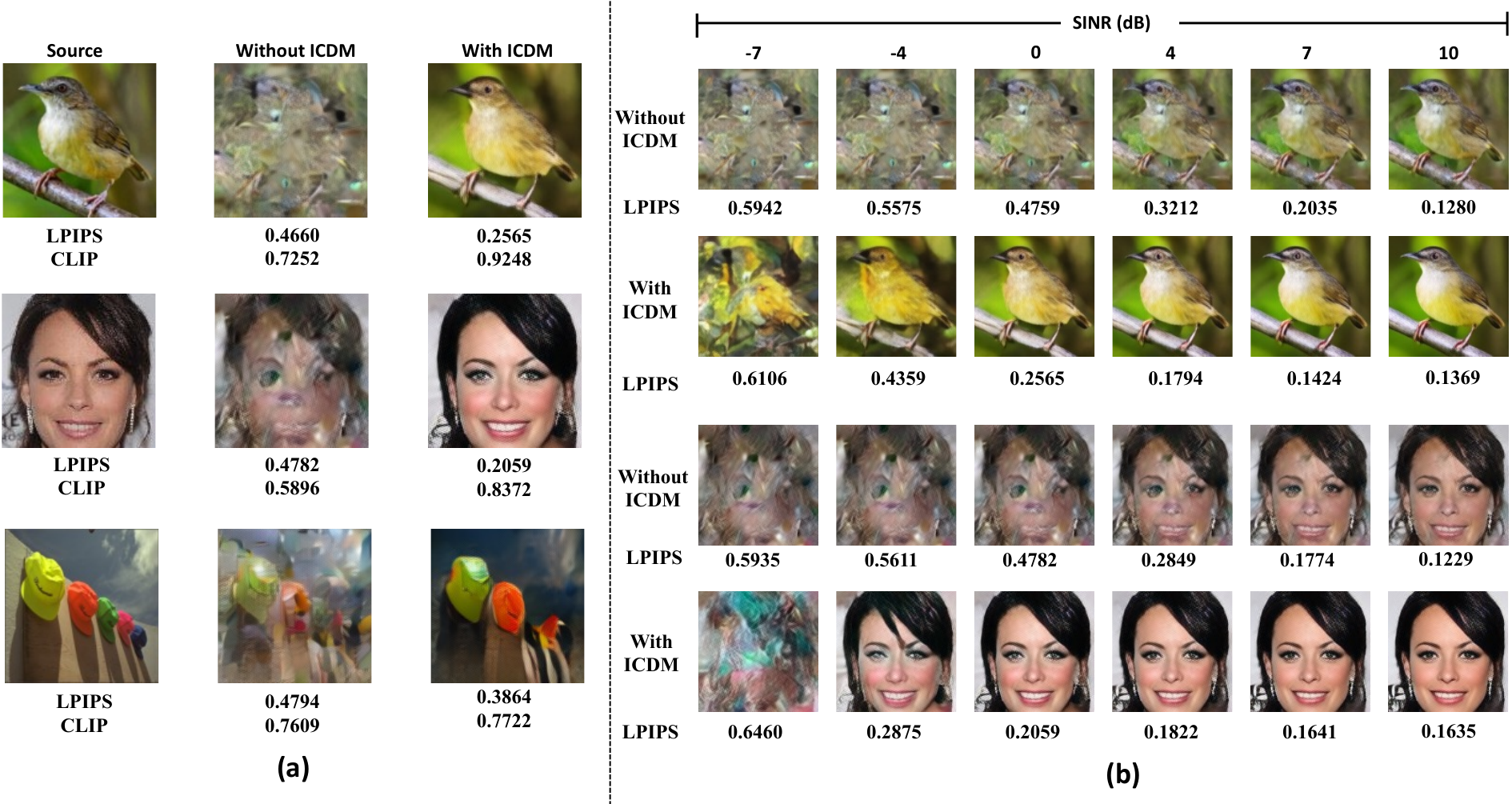}
  \end{center}
    \caption{{(a) Exapmles of the visual comparison between the source images and the images reconstructed with or without the ICDM under the Rayleigh fading channel at SINR=$0$ dB. (b) Exapmles of the visual comparison between the images reconstructed with or without the ICDM versus SINR under the Rayleigh fading channel.}}
    \label{Vis}vvv
\end{figure*}

    \begin{figure*}[t]
      \centering
      \subfigure[]{\label{CelebAMSESwin}\includegraphics[width=0.325\textwidth]{./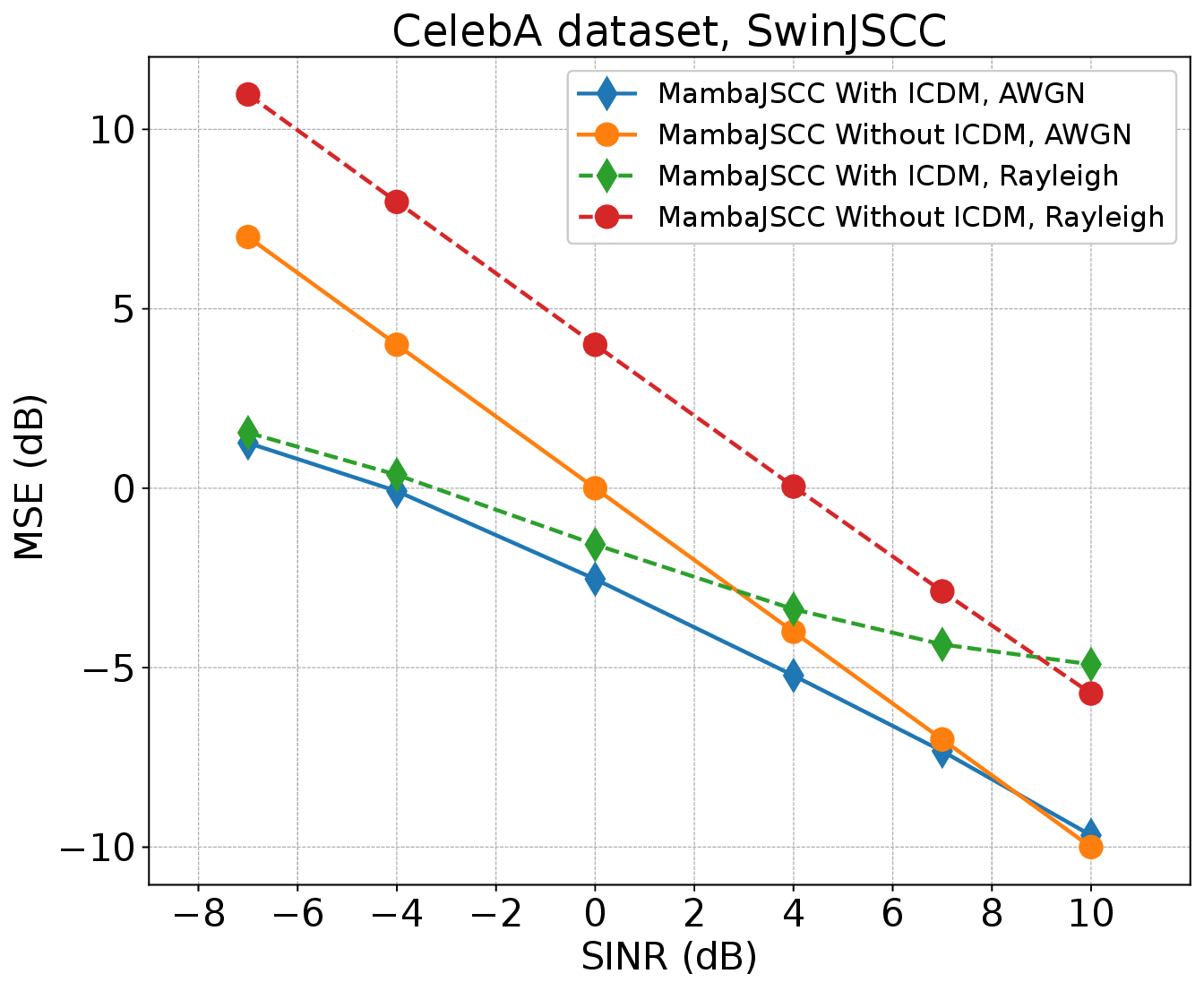}}
      \subfigure[]{\label{CelebALPIPSSwin}\includegraphics[width=0.325\textwidth]{./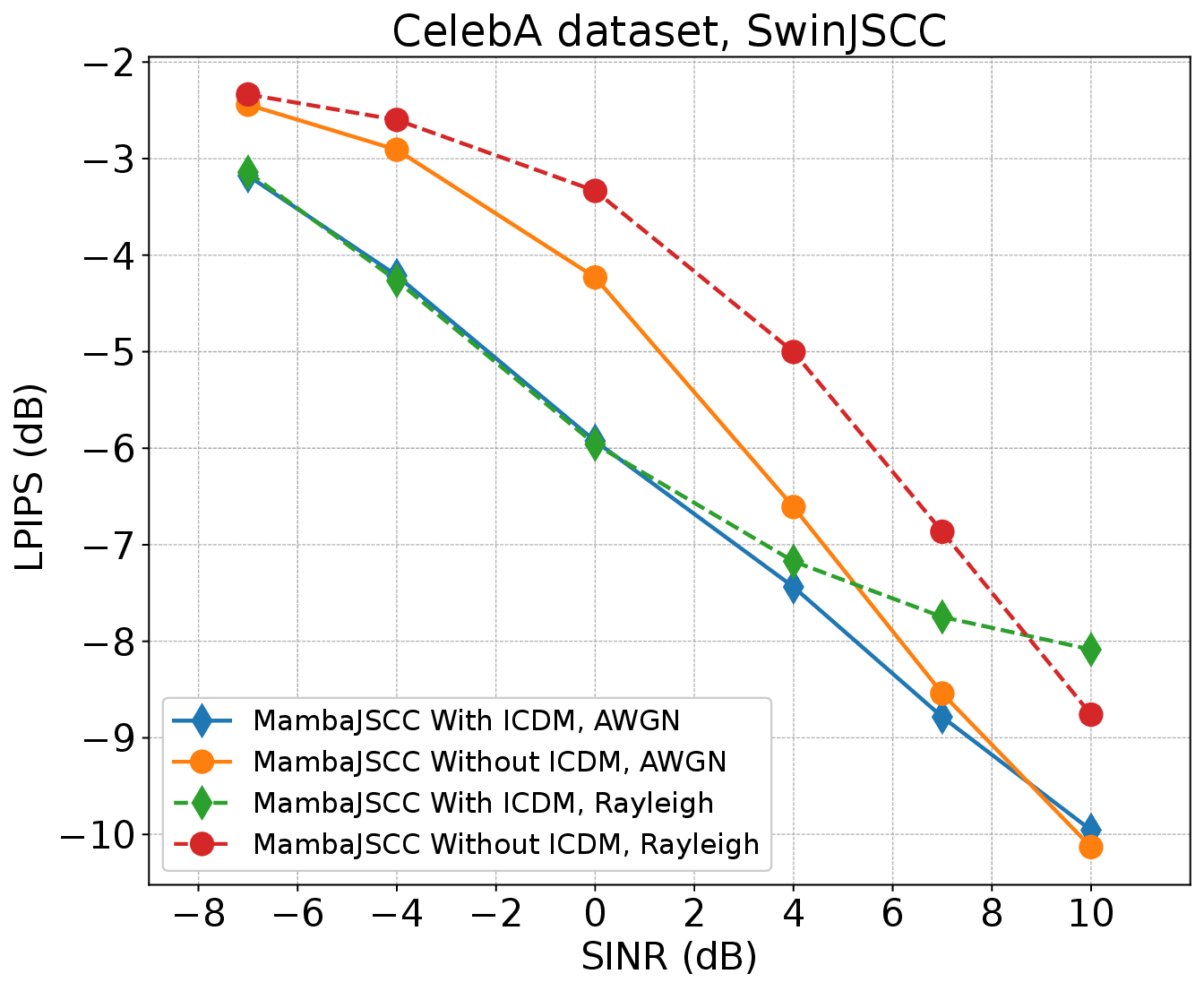}}
      \subfigure[]{\label{CelebACLIPSwin}\includegraphics[width=0.325\textwidth]{./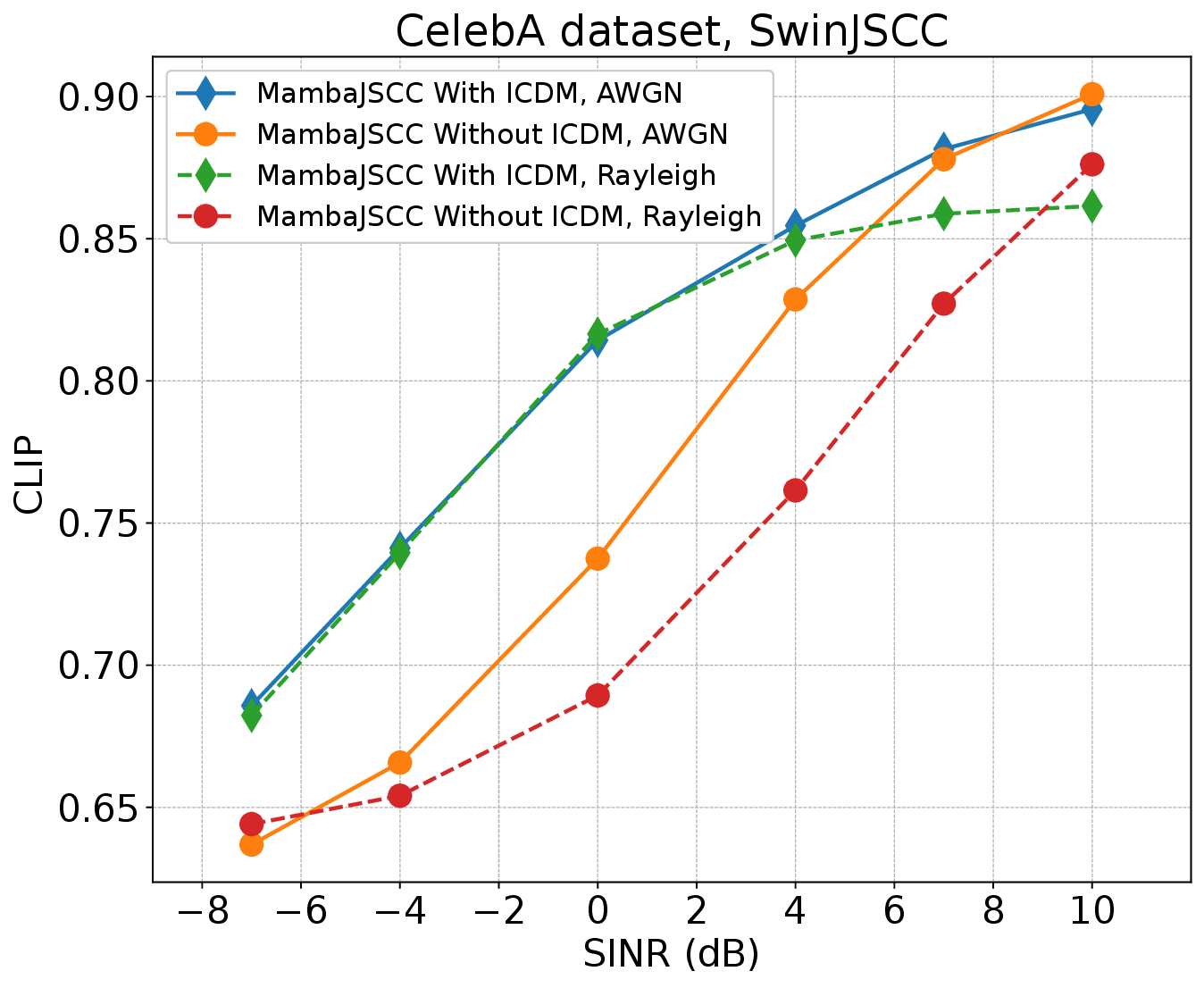}}
      \caption{MSE, LPIPS and CLIP performance of the SwinJSCC-based schemes on the CelebA datasets under the Rayleigh fading channel.}
      \label{CelebASwin}
      \end{figure*}

\begin{figure}[t]
  \begin{center}
    \includegraphics[width=0.45\textwidth]{./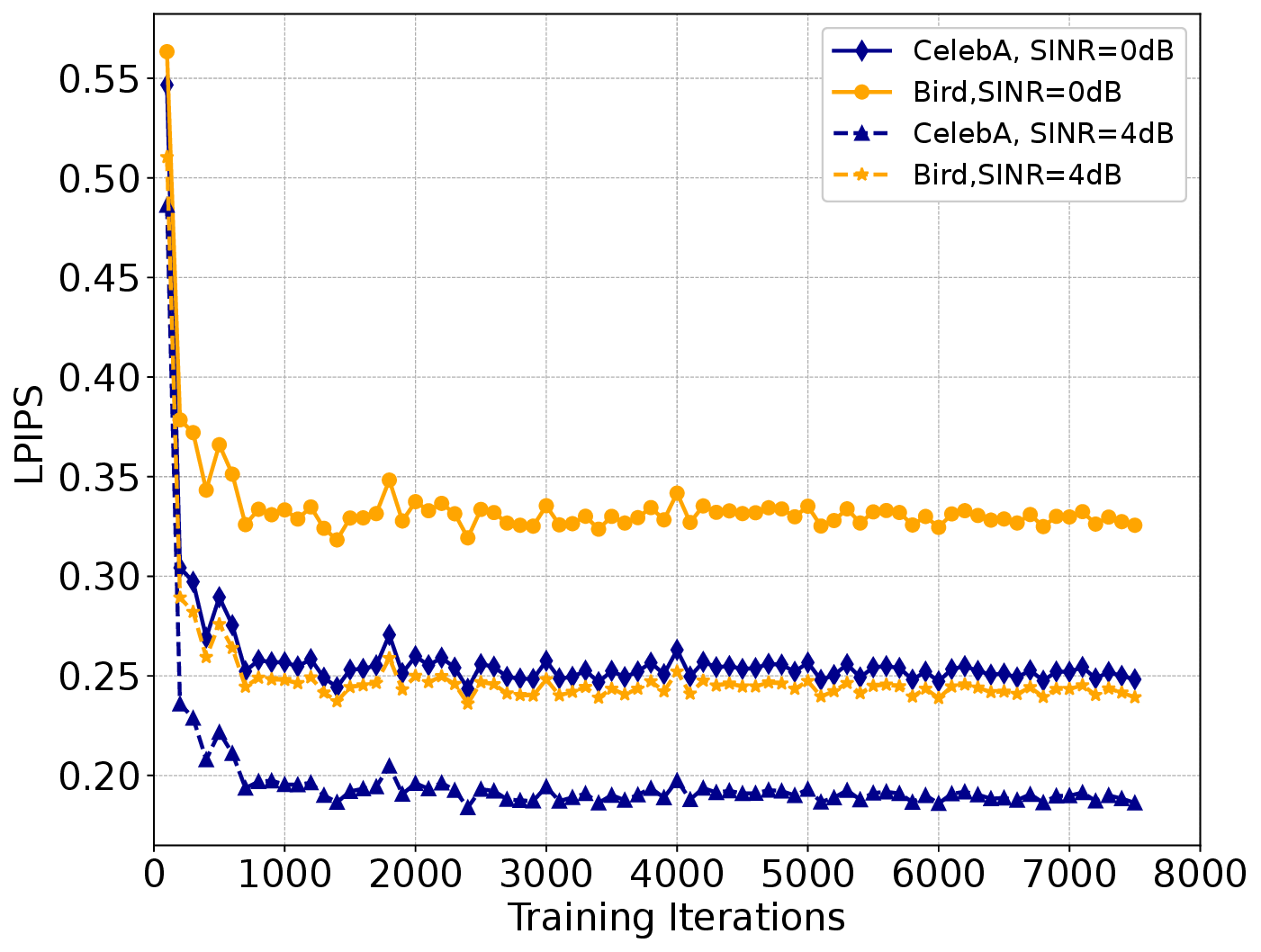}
  \end{center}
    \caption{{LPIPS performance of MambaJSCC with ICDM on the CelebA and BIRDS datasets at SINR=0 and 4 dB under the Rayleigh fading channel.}}
    \label{conv}
\end{figure}

Fig. \ref{CelebA} shows the MSE, LPIPS and CLIP performance of the MambaJSCC-based schemes on the CelebA dataset versus the SNR under both the AWGN and Rayleigh fading channels. It can be observed that the proposed ICDM can significantly improve all the three performance metrics by $4.54$ dB, $2.47$ dB, and $0.12$, respectively, at a signal-to-interference-plus-noise ratio (SINR) of $0$ dB under the Rayleigh fading channel, respectively. Furthermore, the performance gains extend beyond human face images to birds images and even real-world images in the Kodak dataset, as shown in Fig. \ref{BK}. The MambaJSCC-based schemes achieve improvements of $5.11$ dB, $1.51$ dB, and $0.17$ on the BIRDS 525 dataset, and $4.54$ dB, $0.59$ dB, and $0.04$ on the Koda dataset, respectively, at an SINR of $0$ dB under the Rayleigh fading channel.

Fig. \ref{Vis}(a) shows some reconstruction results using MambaJSCC-based schemes for the three datasets under the Rayleigh fading channel at SINR=$0$ dB. It is obvious that the signals are serverly interfered with, making them unrecognizable without ICDM. However, with the proposed ICDM, the reconstructed images are highly realistic and semantically similar to the original ones.

We also conduct experiments on the SwinJSCC-based schemes, as shown in Fig. \ref{CelebASwin}. It can be seen that the SwinJSCC with ICDM scheme significantly outperforms the SwinJSCC without ICDM scheme, achieving improvements of $2.53$ dB, $1.69$ dB, and $0.08$ dB under the AWGN channel, and $5.57$ dB, $2.63$ dB, and $0.13$ dB  under the Rayleigh fading channel at SINR=$0$ dB, respectively. These results demonstrate that the proposed ICDM is suitable for diverse JSCC architectures, regardless of whether the signal and interference originate from the same JSCC framework. This adaptability enables ICDM to support continuously evolving AI foundation model architectures and flexible semantic communication deployment strategies. All the numerical and visual results confirm that the proposed ICDM effectively cancels interference and achieves significant performance gains.

Next, we investigate the characteristics of ICDM revealed by the experiments. For the numerical experimental results, we observe that ICDM achieves its maximum gain at an SINR of 0dB. For example, under a Rayleigh fading channel, this gain reaches $2.47$ dB on the CelebA dataset and $1.57$ dB on the BIRDS 525 dataset. As the SINR deviates from this point, the gain remains significant but gradually diminishes. For example, at SINR = $-4$ dB, the gain reduces to $1.61$ dB on CelebA and $0.65$ dB on BIRDS 525; at SINR = $4$ dB, it decreases to $1.88$ dB on CelebA and $1.42$ dB on BIRDS 525. This behavior stems from ICDM's joint estimation of signal and interference, which effectively leverages learned interference patterns to improve performance. However, it also couples the estimation quality of signal and interference, so that a  degradation in one leads to a decrease in the other. The experiment shown in Fig. \ref{conv} demonstrates this relationship. It polts the performance of ICDM with a well-trained $\mathbf{s}_\theta(\cdot,t)$ under the Rayleigh fading channel of an SINR of $0$ dB and $4$ dB, as a function of the training iterations of $\mathbf{s}_\phi(\cdot,t)$. As $\mathbf{s}_\phi(\cdot,t)$ continuous to learn the interference pattern, i.e., the discrepancy between the learned and true interference patterns decreases, the quality of the estimated interference signal improves, leading to a steady enhancement of the overall system performance. This confirms that accurate interference estimation is crucial for the high quality image reconstruction from the estimated transmitted signal. When SINR is around $0$ dB, the powers of interference and signal are comparable, so both estimations are of both high quality, resulting in the maximum gain. As SINR deviates from $0$ dB, an improvement in the estimation of one component is offset by a degradation in the other, reducing the overall performance gain. Consequently, ICDM achieves its peak gain at SINR = $0$ dB, and the gain diminishes as SINR increases or decreases.

Furthermore, when SINR decreases to $-7$ dB or increases to $10$ dB, the proposed ICDM cannot effectively eliminate interference. The fact that ICDM is only effective within SINR range of $[-4, 7]$ dB is therefore understandable and acceptable. This is more intuitively illustrated in Fig. \ref{Vis}(b). At SINR = $-7$ dB, although ICDM slightly improves perceptual quality, the reconstructed images remain unidentifiable and the system is still unusable, because the interference is simply too strong to extract a meaningful signal. At SINR = $10$ dB, the images reconstructed without ICDM are already clear and of high quality, indicating that interference has little impact and that ICDM's interference removal is unnecessary. However, within the SINR range of $[-4,7]$, ICDM's interference cancellation is highly effective. Without ICDM, the reconstructed images are heavily distorted by interference and unrecognizable. In contrast, with ICDM the interference is dramatically reduced, yielding reconstructed images that are realistic, visually appealing, and semantically faithful to the originals.

Fig. \ref{conv} also demonstrates that $\mathbf{s}_\phi(\cdot,t)$ in ICDM requires only abou 600 training iterations to achieve substantial performance gain, showing that the proposed ICDM can quickly learn interference patterns and adapts to varying conditions. Moreover, ICDM significantly reduces the inference time for the sampling process. As shown in Fig. \ref{IT}, ICDM needs only about $40$ sampling steps and $1.57$s to estimate the signal and interference, while the iterative sampling methods introduced in Diffusion Transformer (DiT) \cite{dit} and denoising diffusion probabilistic model (DDPM) \cite{ddpm} use $200$ and $1000$ iteration steps, requiring $7.8$s and $39.2$s for inference, respectively. This acceleration stems from ICDM's use of a $p$-order ConJPC sampler, which solves the ODEs with $(p+1)$-th order accuracy, compared with the simple Euler methods used by the other approaches. Consequently, ICDM converges to a solution much more rapidly.
\begin{figure}[t]
  \begin{center}
    \includegraphics[width=0.408\textwidth]{./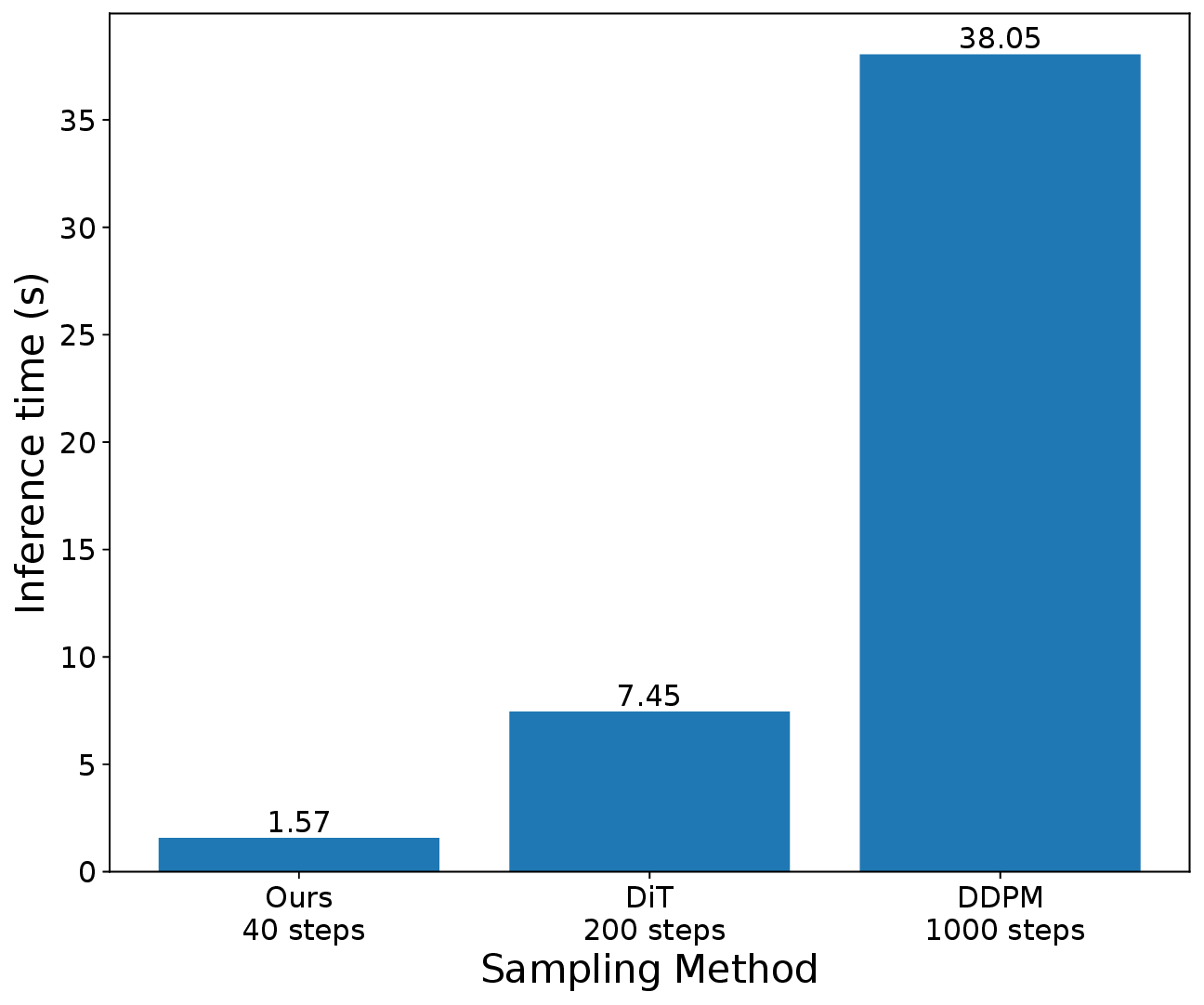}
  \end{center}
    \caption{{Inference time of different sampling methods as recommended in the papers.}}
    \label{IT}
\end{figure}

\subsection{Ablation Study}
For solving the MAP problem formulated in (\ref{MAP_prb}), estimating the gradients of the channel transition probability $\nabla_{\mathbf{x}_t} p_{\mathbf{y}|\mathbf{x}_t,\mathbf{z}_t,\mathbf{h_x}}(\mathbf{y}|\mathbf{x}_t,\mathbf{z}_t,\mathbf{h_x})$ is crucial. Therefore, we compare our estimation with some previous proposed methods, including DPS \cite{dps}, GDM \cite{gdm} and the Projection method \cite{projection}. In both DPS and GDM, $\mathbf{\hat{x}}_0$ and $\mathbf{\hat{z}}_0$ are first estimated from the samples $\mathbf{{x}}_{t-1}$ and $\mathbf{{z}}_{t-1}$ as 
\begin{align}
  \mathbf{\hat{x}}_T&=\frac{1}{\sqrt{\alpha_{t-1}}}\left(\mathbf{x}_{t-1}-\sqrt{1-\alpha_{t-1}}\mathbf{s}_\theta(\mathbf{x}_{t-1},t-1)\right),\nonumber\\
  \mathbf{\hat{z}}_T&=\frac{1}{\sqrt{\alpha_{t-1}}}\left(\mathbf{z}_{t-1}-\sqrt{1-\alpha_{t-1}}\mathbf{s}_\phi(\mathbf{z}_{t-1},t-1)\right),
\end{align}
and then DPS estimate the gradients as 
\begin{align}\label{DPS}
\mathbf{\bar{r}}_x(\cdot)=\kappa_t\nabla_{\mathbf{\hat{x}}_{T}} ||\mathbf{y}-\sqrt{P_x}\mathbf{W_x\hat{x}}_T-\sqrt{P_z}\mathbf{W_z\hat{z}}_T||^2_2,\\
\mathbf{\bar{r}}_z(\cdot)=\kappa_t\nabla_{\mathbf{\hat{z}}_{T}} ||\mathbf{y}-\sqrt{P_x}\mathbf{W_x\hat{x}}_T-\sqrt{P_z}\mathbf{W_z\hat{z}}_T||^2_2,\nonumber
\end{align}
where $\kappa_t=\frac{1}{2||\mathbf{y}-\sqrt{P_x}\mathbf{W_x\hat{x}}_T-\sqrt{P_z}\mathbf{W_z\hat{z}}_T||}$.

The GDM method is similar to the DPS method, but it uses the gradient of the transition probability as
\begin{align}\label{GDM}
  &\mathbf{\bar{r}}_x(\cdot)=\sqrt{P_x}\mathbf{W_x}(\mathbf{y}-\sqrt{P_x}\mathbf{W_x\hat{x}}_T-\sqrt{P_z}\mathbf{W_z\hat{z}}_T)\\
  &(\mathbf{W_x}\mathbf{W_x}^T+\mathbf{W_z}\mathbf{W_z}^T+\frac{(2-\alpha_{t-1})||\mathbf{y}||^2_2}{1-\alpha_{t-1}}\mathbf{I})^{-1},\nonumber\\
  &\mathbf{\bar{r}}_z(\cdot)=\sqrt{P_z}\mathbf{W_z}(\mathbf{y}-\sqrt{P_x}\mathbf{W_x\hat{x}}_T-\sqrt{P_z}\mathbf{W_z\hat{z}}_T)\nonumber\\
  &(\mathbf{W_x}\mathbf{W_x}^T+\mathbf{W_z}\mathbf{W_z}^T+\frac{(2-\alpha_{t-1})||\mathbf{y}||^2_2}{1-\alpha_{t-1}}\mathbf{I})^{-1}.\nonumber
\end{align}

In the Projection method, the projection adds standard Gaussian noise $\mathbf{\epsilon}_y$ to the received signal $\mathbf{y}$ as $\mathbf{y}_{t-1}$ by
\begin{align}
  \mathbf{y}_{t-1}=\sqrt{\alpha_{t-1}}\mathbf{y}+\sqrt{1-\alpha_{t-1}}\mathbf{\epsilon}_y,
\end{align}
and then compute the gradients as
\begin{align}\label{Projection}
\mathbf{\bar{r}}_x(\cdot)=\nabla_{\mathbf{x}_{t-1}} ||\mathbf{y}_{t-1}-\sqrt{P_x}\mathbf{W_x{x}}_{t-1}-\sqrt{P_z}\mathbf{W_z{z}}_{t-1}||^2_2,\\
\mathbf{\bar{r}}_z(\cdot)=\nabla_{\mathbf{z}_{t-1}} ||\mathbf{y}_{t-1}-\sqrt{P_x}\mathbf{W_x{x}}_{t-1}-\sqrt{P_z}\mathbf{W_z{z}}_{t-1}||^2_2.\nonumber
\end{align}

We apply these estimation methods to ICDM and evaluate the performance on the MambaJSCC with ICDM scheme. The results is shown in Table. \ref{eg}. The baseline represents the performance of MambaJSCC without ICDM scheme. The proposed method achieves the best performance because it estimates the gradient using the current samples and received signal under accurate assumptions about the transmitted signal and interference and theoretical derivation. In contrast, the DPS and GDM methods perform considerably worse: they compute $\mathbf{\hat{x}}_{T}$ and $\mathbf{\hat{z}}_{T}$ from $\mathbf{x}_{t-1}$ and $\mathbf{z}_{t-1}$ in a single estimation step, introducing larger error and degrading performance. The Projection method performs the worst of all, since adding noise to the received signal causes loss of information.

Next, we compare the performance of our ICDM and the CDDM proposed in \cite{CDDM} with similar denoising steps under the Rayleigh fading channel with SINR=0 dB based on MambaJSCC. The baseline represents the performance of MambaJSCC without ICDM or CDDM. The CDDM here learns the distribution of the transmitted signal and selects denoising steps according to the SINR. Its total diffusion steps are set to 200 because here the CDDM is also built upon DiT for fairness and the recommended diffusion steps in DiT model are 200.
It could be seen from Table \ref{tab1} that CDDM can enhance the perceptual quality but still performs worse than ICDM in both datasets even with more iteration steps. This is because the CDDM directly treats the interference as Gaussian noise, which is not accurate. While our ICDM learns the interference pattern and achieve interference cancellation with a theoretically designed method.
\begin{table}[tbp]
  \centering
  \caption{MSE, LPIPS and CLIP performance and required sampling iteration steps for interference cancellation using different guided-gradient estimation methods for the transition probability funtion.}
  \label{eg}
  \resizebox{\linewidth}{!}
  {
  \begin{tabular}{|c|c|c|c|c|}
    \hline
    Method & MSE (dB)& LPIPS & CLIP & Iteration Steps\\\hline
    Baseline & 0.72 & 0.44 & 0.70 & -\\
    DPS & 1.30 & 0.44 & 0.72 & 60\\
    GDM & 1.22 & 0.39 & 0.74 & 40\\
    Projection & 1.46 & 0.53 & 0.67 & 60\\
    Ours & \textbf{0.72} & \textbf{0.25} & \textbf{0.82} & 40\\\hline
  \end{tabular}
  }
\end{table}

\begin{table}[t]
  \renewcommand\arraystretch{1.5}
  \centering
  \caption{{The LPIPS, CLIP performance and Iteration steps of CDDM and ICDM under the Rayleigh fading channel with SINR=0 dB.}}
  \label{tab1}
  \resizebox{\linewidth}{!}{
  \begin{tabular}{|c|c|c|c|c|c|}\hline
  \multirow[vpos]{2}{*}{Method} & \multicolumn{2}{c|}{CelebA} & \multicolumn{2}{c|}{BIRDS}&\multirow[vpos]{2}{*}{Iteration Steps}\\ \cline{2-5}
  & LPIPS & CLIP &LPIPS & CLIP&\\
  \hline
  Baseline& 0.55 & 0.70 & 0.47 & 0.70 & -\\\hline
  CDDM& 0.37 & 0.73 & 0.38 & 0.80 & 116\\\hline
  ICDM& 0.25 & 0.82 & 0.33 & 0.87 & 40\\\hline

  \end{tabular}}
  \end{table}

\section{Conclusion}\label{V}
In this paper, we propose a novel interference cancellation paradigm by formulating the problem as a MAP estimation over the joint posterior distribution of signal and interference. We theoretically prove that its solution yields  high-quality estimates of both components. To solve this problem effectively, we develop ICDM, which accurately estimates the log-gradient of the joint posterior using diffusion models and Bayes' theorem, and integrates it with a $p$-order ConJPC sampler. The proposed ICDM is well-suited to practical scenarios. Extensive experiments confirm that ICDM significantly reduces MSE and improves the perceptual quality of reconstructed images while requiring less training and inference time. For example, on the CelebA dataset under the Rayleigh fading channel with SNR=$20$ dB and SINR=$0$ dB, ICDM reduces MSE by $5.67$ dB, improves LPIPS by $2.65$ dB, and increases the CLIP score from $0.7$ to $0.82$ compared with schemes without ICDM. 
\bibliography{ref}

\begin{thebibliography}{10}
\providecommand{\url}[1]{#1}
\csname url@samestyle\endcsname
\providecommand{\newblock}{\relax}
\providecommand{\bibinfo}[2]{#2}
\providecommand{\BIBentrySTDinterwordspacing}{\spaceskip=0pt\relax}
\providecommand{\BIBentryALTinterwordstretchfactor}{4}
\providecommand{\BIBentryALTinterwordspacing}{\spaceskip=\fontdimen2\font plus
\BIBentryALTinterwordstretchfactor\fontdimen3\font minus
  \fontdimen4\font\relax}
\providecommand{\BIBforeignlanguage}[2]{{%
\expandafter\ifx\csname l@#1\endcsname\relax
\typeout{** WARNING: IEEEtran.bst: No hyphenation pattern has been}%
\typeout{** loaded for the language `#1'. Using the pattern for}%
\typeout{** the default language instead.}%
\else
\language=\csname l@#1\endcsname
\fi
#2}}
\providecommand{\BIBdecl}{\relax}
\BIBdecl

\bibitem{DDPM2015}
J.~Sohl-Dickstein, E.~Weiss, N.~Maheswaranathan, and S.~Ganguli, ``{Deep
  Unsupervised Learning Using Nonequilibrium Thermodynamics},'' in
  \emph{International conference on machine learning}, 2015, pp. 2256--2265.

\bibitem{ddpm}
J.~Ho, A.~Jain, and P.~Abbeel, ``{Denoising Diffusion Probabilistic Models},''
  \emph{Advances in neural information processing systems}, vol.~33, pp.
  6840--6851, 2020.

\bibitem{smld}
Y.~Song and S.~Ermon, ``{Generative modeling by estimating gradients of the
  data distribution},'' \emph{Advances in neural information processing
  systems}, vol.~32, 2019.

\bibitem{projection}
Y.~Song, J.~Sohl-Dickstein, D.~P. Kingma, A.~Kumar, S.~Ermon, and B.~Poole,
  ``{Score-Based Generative Modeling through Stochastic Differential
  Equations},'' in \emph{International Conference on Learning Representations},
  2021.

\bibitem{Liang}
C.~Liang, H.~Du, Y.~Sun, D.~Niyato, J.~Kang, D.~Zhao, and M.~A. Imran,
  ``{Generative AI-Driven Semantic Communication Networks: Architecture,
  Technologies, and Applications},'' \emph{IEEE Transactions on Cognitive
  Communications and Networking}, vol.~11, no.~1, pp. 27--47, 2025.

\bibitem{LAMSC}
F.~Jiang, L.~Dong, Y.~Peng, K.~Wang, K.~Yang, C.~Pan, and X.~You, ``{Large AI
  Model Empowered Multimodal Semantic Communications},'' \emph{IEEE
  Communications Magazine}, vol.~63, no.~1, pp. 76--82, 2025.

\bibitem{LLM6G}
F.~Jiang, Y.~Peng, L.~Dong, K.~Wang, K.~Yang, C.~Pan, D.~Niyato, and O.~A.
  Dobre, ``{Large Language Model Enhanced Multi-Agent Systems for 6G
  Communications},'' \emph{IEEE Wireless Communications}, vol.~31, no.~6, pp.
  48--55, 2024.

\bibitem{Sengupta}
U.~Sengupta, C.~Jao, A.~Bernacchia, S.~Vakili, and D.-s. Shiu, ``{Generative
  Diffusion Models for Radio Wireless Channel Modelling and Sampling},'' in
  \emph{GLOBECOM 2023 - 2023 IEEE Global Communications Conference}, 2023, pp.
  4779--4784.

\bibitem{SING}
J.~Chen, S.~F. Yilmaz, D.~You, P.~L. Dragotti, and D.~Gündüz, ``{SING:
  Semantic Image Communications using Null-Space and INN-Guided Diffusion
  Models},'' \emph{arXiv preprint arXiv:2503.12484}, 2025.

\bibitem{shiguangming}
M.~Yang, D.~Gao, F.~Xie, J.~Li, X.~Song, and G.~Shi, ``{SG2SC: A Generative
  Semantic Communication Framework for Scene Understanding-Oriented Image
  Transmission},'' in \emph{IEEE International Conference on Acoustics, Speech
  and Signal Processing (ICASSP)}, 2024, pp. 13\,486--13\,490.

\bibitem{Sergio}
F.~Pezone, O.~Musa, G.~Caire, and S.~Barbarossa, ``{Semantic-Preserving Image
  Coding based on Conditional Diffusion Models},'' \emph{arXiv preprint
  arXiv:2310.15737}, 2024.

\bibitem{DiffCom}
S.~Wang, J.~Dai, K.~Tan, X.~Qin, K.~Niu, and P.~Zhang, ``{DiffCom: Channel
  Received Signal is a Natural Condition to Guide Diffusion Posterior
  Sampling},'' \emph{IEEE Journal on Selected Areas in Communications}, pp.
  1--1, 2025.

\bibitem{VLM-CSC}
F.~Jiang, C.~Tang, L.~Dong, K.~Wang, K.~Yang, and C.~Pan, ``{Visual Language
  Model-Based Cross-Modal Semantic Communication Systems},'' \emph{IEEE
  Transactions on Wireless Communications}, vol.~24, no.~5, pp. 3937--3948,
  2025.

\bibitem{CDDM}
T.~Wu, Z.~Chen, D.~He, L.~Qian, Y.~Xu, M.~Tao, and W.~Zhang, ``{CDDM: Channel
  Denoising Diffusion Models for Wireless Semantic Communications},''
  \emph{IEEE Transactions on Wireless Communications}, vol.~23, no.~9, pp.
  11\,168--11\,183, 2024.

\bibitem{DM-MIMO}
Y.~Duan, T.~Wu, Z.~Chen, and M.~Tao, ``{DM-MIMO: Diffusion Models for Robust
  Semantic Communications over MIMO Channels},'' in \emph{2024 IEEE/CIC
  International Conference on Communications in China (ICCC)}, 2024, pp.
  1609--1614.

\bibitem{mohao}
H.~Mo, Y.~Sun, S.~Yao, H.~Chen, Z.~Chen, X.~Xu, N.~Ma, M.~Tao, and S.~Cui,
  ``{SCDM: Score-Based Channel Denoising Model for Digital Semantic
  Communications},'' in \emph{Proc. IEEE ICC 2025}, 2025, pp. 1--5.

\bibitem{zhouxingyu}
X.~Zhou, L.~Liang, J.~Zhang, P.~Jiang, Y.~Li, and S.~Jin, ``{Generative
  Diffusion Models for High Dimensional Channel Estimation},'' \emph{IEEE
  Transactions on Wireless Communications}, pp. 1--1, 2025.

\bibitem{Bameri}
S.~Bameri, S.~Talebi, R.~H. Gohary, and H.~Yanikomeroglu, ``{A Novel
  Self-Interference Cancellation Scheme for Channel-Unaware Differential
  Space-Time Two-Way Relay Networks},'' \emph{IEEE Transactions on Wireless
  Communications}, vol.~17, no.~2, pp. 1226--1241, 2018.

\bibitem{liufengwei}
F.~Liu, H.~Zhao, and Y.~Tang, ``{A Narrowband Interference Suppression
  Algorithm for Time Synchronization},'' in \emph{2015 IEEE/CIC International
  Conference on Communications in China (ICCC)}, 2015, pp. 1--6.

\bibitem{Langevin}
P.~Langevin, ``{On the Theory of Brownian Motion},'' \emph{C.R. Acad. Sci.
  Paris}, vol. 146, pp. 530--533, 1908.

\bibitem{lanproof}
I.~Sato and H.~Nakagawa, ``{Approximation Analysis of Stochastic Gradient
  Langevin Dynamics by Using Fokker-Planck Equation and It\^{o} Process},'' in
  \emph{Proceedings of the 31st International Conference on International
  Conference on Machine Learning}, vol.~32, 2014, pp. 982--990.

\bibitem{CGRSA-Net}
J.~Liu and T.~Ding, ``{Digital Self-Interference Cancellation for Full-Duplex
  Systems Based on CNN and GRU},'' \emph{Electronics}, vol.~13, no.~15, 2024.

\bibitem{dit}
W.~Peebles and S.~Xie, ``{Scalable Diffusion Models with Transformers},'' in
  \emph{2023 IEEE/CVF International Conference on Computer Vision (ICCV)},
  2023, pp. 4172--4182.

\bibitem{Qinzj}
X.~Peng, Z.~Qin, X.~Tao, J.~Lu, and K.~B. Letaief, ``A robust image semantic
  communication system with multi-scale vision transformer,'' \emph{IEEE
  Journal on Selected Areas in Communications}, vol.~43, no.~4, pp. 1278--1291,
  2025.

\bibitem{SwinJSCC}
K.~Yang, S.~Wang, J.~Dai, X.~Qin, K.~Niu, and P.~Zhang, ``{SwinJSCC: Taming
  Swin Transformer for Deep Joint Source-Channel Coding},'' \emph{IEEE
  Transactions on Cognitive Communications and Networking}, vol.~11, no.~1, pp.
  90--104, 2025.

\bibitem{MambaJSCC}
\BIBentryALTinterwordspacing
T.~Wu, Z.~Chen, M.~Tao, X.~Xu, W.~Zhang, and P.~Zhang, ``{MambaJSCC: Deep Joint
  Source-Channel Coding with Visual State Space Model},'' \emph{submitted to
  IEEE Trans. Wireless Commun.}, 2024. [Online]. Available:
  \url{https://arxiv.org/abs/2405.03125}
\BIBentrySTDinterwordspacing

\bibitem{lpips}
R.~Zhang, P.~Isola, A.~A. Efros, E.~Shechtman, and O.~Wang, ``{The Unreasonable
  Effectiveness of Deep Features as A Perceptual Metric},'' in
  \emph{Proceedings of the IEEE conference on computer vision and pattern
  recognition}, 2018, pp. 586--595.

\bibitem{argmaxTheorm}
J.~Whang, Q.~Lei, and A.~Dimakis, ``{Solving inverse problems with a flow-based
  noise model},'' in \emph{International Conference on Machine Learning}, 2021,
  pp. 11\,146--11\,157.

\bibitem{prior_KL}
Y.~Song, C.~Durkan, I.~Murray, and S.~Ermon, ``{Maximum Likelihood Training of
  Score-based Diffusion Models},'' \emph{Advances in neural information
  processing systems}, vol.~34, pp. 1415--1428, 2021.

\bibitem{dps}
H.~Chung, J.~Kim, M.~T. Mccann, M.~L. Klasky, and J.~C. Ye, ``{Diffusion
  Posterior Sampling for General Noisy Inverse Problems},'' \emph{arXiv
  preprint arXiv:2209.14687}, 2022.

\bibitem{gdm}
J.~Song, A.~Vahdat, M.~Mardani, and J.~Kautz, ``{Pseudoinverse-guided Diffusion
  Models for Inverse Problems},'' in \emph{International Conference on Learning
  Representations}, 2023.

\bibitem{rmstruct}
T.~S. Stevens, H.~van Gorp, F.~C. Meral, J.~Shin, J.~Yu, J.-L. Robert, and
  R.~J. van Sloun, ``{Removing Structured Noise using Diffusion Models},''
  \emph{Transactions on Machine Learning Research}, pp. 2835--8856, 2025.

\bibitem{unipc}
W.~Zhao, L.~Bai, Y.~Rao, J.~Zhou, and J.~Lu, ``{UniPC: A Unified
  Predictor-Corrector Framework for Fast Sampling of Diffusion Models},''
  \emph{Advances in Neural Information Processing Systems}, vol.~36, pp.
  49\,842--49\,869, 2023.

\bibitem{diffbeatgan}
P.~Dhariwal and A.~Nichol, ``{Diffusion Models Beat GANs on Image Synthesis},''
  \emph{Advances in neural information processing systems}, vol.~34, pp.
  8780--8794, 2021.

\bibitem{CelebA}
Z.~Liu, P.~Luo, X.~Wang, and X.~Tang, ``{Deep Learning Face Attributes in the
  Wild},'' in \emph{Proceedings of International Conference on Computer Vision
  (ICCV)}, December 2015.

\bibitem{OpenImg}
A.~Kuznetsova, H.~Rom, N.~Alldrin, J.~Uijlings, I.~Krasin, J.~Pont-Tuset,
  S.~Kamali, S.~Popov, M.~Malloci, A.~Kolesnikov, T.~Duerig, and V.~Ferrari,
  ``{The Open Images Dataset V4: Unified Image Classification, Object
  Detection, and Visual Relationship Detection at Scale},'' \emph{International
  Journal of Computer Vision}, vol. 128, no.~7, p. 1956–1981, 2020.

\bibitem{clipscore}
J.~Hessel, A.~Holtzman, M.~Forbes, R.~L. Bras, and Y.~Choi, ``{CLIPScore: A
  Reference-free Evaluation Metric for Image Captioning},'' \emph{arXiv
  preprint arXiv:2104.08718}, 2021.

\end{thebibliography}

\end{document}